
\documentclass{article}
\usepackage{fullpage}



\usepackage[usenames]{color}
\usepackage[colorlinks = true]{hyperref}
\usepackage[english]{babel}   
\usepackage{bm,bbm,amsmath,amssymb,amsthm,graphicx}
\usepackage{url}
\usepackage[caption=false,font=normalsize,labelfont=sf,textfont=sf]{subfig}
\usepackage{algorithm}
\usepackage{algpseudocode}
\usepackage{tikz}
\usetikzlibrary{calc,patterns,decorations.pathmorphing,decorations.markings}

\newcommand{\change}[1]{\ensuremath{\operatorname{#1}}}
\newcommand{\MAT}{\left[ \begin{array}}
\newcommand{\mat}{\end{array} \right]}

\newtheorem{Corollary}{Corollary}[section]
\newtheorem{Lemma}{Lemma}[section]
\newtheorem{Theorem}{Theorem}[section]

\usepackage{etoolbox}
\makeatletter
\patchcmd{\@makecaption}
  {\scshape}
  {}
  {}
  {}
\makeatother

\def \a {\bm{a}}
\def \A {\mathbf{A}}
\def \AA {\mathcal{A}}

\def \At {\widetilde{\mathbf{A}}}

\def \c {\bm{c}}
\def \cb {\breve{c}}

\def \ct {\widetilde{c}}

\def \CC {\mathcal{C}}
\def \CCC {\mathbb{C}}

\def \D {\mathbf{D}}

\def \e {\bm{e}}
\def \E {\mathbf{E}}
\def \EEE{\mathbb{E}}

\def \fh {\widehat{f}}

\def \fs{f^\star}
\def \FF {\mathcal{F}}

\def \HH {\mathcal{H}}
\def \I {\mathbf{I}}

\def \Kh{\widehat{K}}

\def \LL {\mathcal{L}}

\def \NN {\mathcal{N}}
\def \OO {\mathcal{O}}

\def \PP {\mathcal{P}}

\def \Q {\mathbf{Q}}
\def \QQ {\mathcal{Q}}

\def \rh{\widehat{r}}
\def \rs{r^\star}
\def \R {\mathbf{R}}
\def \RR {\mathcal{R}}
\def \RRR {\mathbb{R}}
\def \S {\mathbf{S}}

\def \T {\mathbf{T}}

\def \u {\bm{u}}

\def \U {\mathbf{U}}

\def \V {\mathbf{V}}

\def \W {\mathbf{W}}

\def \x {\bm{x}}
\def \X {\mathbf{X}}

\def \Xh {\widehat{\mathbf{X}}}
\def \Xs {\mathbf{X}^\star}

\def \y {\bm{y}}
\def \Y {\mathbf{Y}}

\def \Z {\mathbf{Z}}

\def \sumk{\sum_{k=1}^K}

\def \bPhi {\boldsymbol{\Phi}}
\def \bphi {\boldsymbol{\phi}}

\def \bSigma {\boldsymbol{\Sigma}}

\def \zero {\mathbf{0}}




\begin{document}

\title{Recovery Analysis of Damped Spectrally Sparse Signals and Its Relation to MUSIC}

\author{Shuang~Li,
        Hassan Mansour,
        and Michael B. Wakin
\thanks{SL and MBW are with the Department of Electrical Engineering, Colorado School of Mines, Golden, CO 80401. Email: \{shuangli,mwakin\}@mines.edu. HM is with Mitsubishi Electric Research Laboratories, Cambridge, MA 02139. Email: mansour@merl.com.}
}

%

\maketitle

\begin{abstract}
{One of the classical approaches for estimating the frequencies and damping factors in a spectrally sparse signal is the MUltiple SIgnal Classification (MUSIC) algorithm, which exploits the low-rank structure of an autocorrelation matrix. Low-rank matrices have also received considerable attention recently in the context of optimization algorithms with partial observations, and nuclear norm minimization (NNM) has been widely used as a popular heuristic of rank minimization for low-rank matrix recovery problems. On the other hand, it has been shown that NNM can be viewed as a special case of atomic norm minimization (ANM), which has achieved great success in solving line spectrum estimation problems. However, as far as we know, the general ANM (not NNM) considered in many existing works can only handle frequency estimation in undamped sinusoids. In this work, we aim to fill this gap and deal with damped spectrally sparse signal recovery problems. In particular, inspired by the dual analysis used in ANM, we offer a novel optimization-based perspective on the classical MUSIC algorithm and propose an algorithm for spectral estimation that involves searching for the peaks of the dual polynomial corresponding to a certain NNM problem, and we show that this algorithm is in fact equivalent to MUSIC itself. Building on this connection, we also extend the classical MUSIC algorithm to the missing data case.  We provide exact recovery guarantees for our proposed algorithms and quantify how the sample complexity depends on the true spectral parameters. In particular, we provide a parameter-specific recovery bound for low-rank matrix recovery of jointly sparse signals rather than use certain incoherence properties as in existing literature. Simulation results also indicate that the proposed algorithms significantly outperform some relevant existing methods (e.g., ANM) in frequency estimation of damped exponentials.}
{Spectral estimation, nuclear norm minimization, atomic norm minimization, MUSIC algorithm, low-rank matrix completion.}
\end{abstract}


\section{Introduction}
\label{intr}

In this paper, we consider the problem of identifying the frequencies and damping factors contained in a spectrally sparse signal, namely, a superposition of a few complex sinusoids with damping, either from a complete set of uniform samples (which we refer to as full observations) or from a random set of partial uniform samples (which we refer to as the missing data case). This kind of signal arises in many applications, such as nuclear magnetic resonance spectroscopy~\cite{umesh1996estimation, qu2015accelerated}, radar processing~\cite{xie2017radar, zhu2016dimensionality}, modal analysis~\cite{park2015sampling, li2017atomicIEEE}, and electric motor fault detection~\cite{xie2020robust}.
It is well known that the frequencies and damping factors can be identified by the classical spectrum estimation approaches, such as Prony's method~\cite{de1795essai}, the Matrix Pencil method~\cite{hua1990matrix}, and the MUltiple SIgnal Classification (MUSIC) algorithm~\cite{schmidt1981signal, schmidt1986multiple}, when full observations are available. However, in many real-world applications, obtaining such full observations with high speed uniform sampling is of high cost and technically prohibitive. Lower-rate, nonuniform sampling can be an appealing alternative~\cite{marvasti2012nonuniform,li2017atomicIEEE,Li15,yang2014exact,park2014modal} and results in the partial observations (missing data) discussed in this work. 

The MUSIC algorithm, which is widely used in signal processing~\cite{thomson1982spectrum, krim1996two}, was first proposed by Schmidt as an improvement to Pisarenko's method~\cite{schmidt1981signal}. MUSIC exploits the low-rank structure of an autocorrelation matrix, which is divided into the noise subspace and signal subspace via an eigenvalue decomposition. The spectral parameters are then identified by searching for the zeros of a noise-space correlation function~\cite{liao2016music}. The MUSIC algorithm can be used either for spectral analysis of one signal (the single measurement vector, or SMV, problem) or for  multiple measurement vector (MMV) problems involving joint sparse frequency estimation~\cite{yang2014exact}. However, a limitation of these classical spectral estimation methods is that they are not compatible with the  random sampling or compression protocols that can be used to reduce the front-end sampling burden. One recent work~\cite{suryaprakash2012performance} does adapt the MUSIC algorithm to the setting with noisy missing data, the authors provide asymptotic theoretical guarantees on the performance of a singular value decomposition (SVD) on the noisy partially observed data matrix. In contrast, in this work we consider two settings---the (noiseless and noisy) full observation case and the noiseless missing data case---and establish non-asymptotic theoretical guarantees for our proposed algorithms.

We focus on both the SMV and MMV settings in this paper. Samples of the spectrally sparse vector-valued signal (MMV setting) considered in this work can be arranged into a low-rank matrix while samples of the spectrally sparse scalar-valued signal (SMV setting) can be used to form a Hankel matrix, which is also a low-rank matrix. Low-rank matrices have received considerable attention recently in the context of optimization algorithms with partial observations. In particular, low-rank matrix recovery from missing data appears in many practical problems such as matrix completion~\cite{candes2009exact,cai2010singular}, low-rank approximation~\cite{gillard2013optimization, larsson2016convex}, system identification~\cite{fazel2001rank, fazel2013hankel}, and image denoising~\cite{gu2014weighted,zhang2014hyperspectral}. A common approach for recovering a low-rank matrix is known as rank minimization. However, rank minimization problems are, in general, NP-hard. Fortunately, a popular heuristic of rank minimization problems, nuclear norm minimization (NNM), performs very well in low-rank matrix recovery when certain conditions on the measurement system are satisfied~\cite{candes2009exact}.
Recently, it has been shown that NNM for low-rank matrix recovery can be viewed as a special case of atomic norm minimization (ANM) when the atoms are composed of rank one matrices~\cite{candes2014towards, chandrasekaran2012convex}. ANM is a general optimization framework for decomposing structured  signals and matrices into sparse combinations of continuously-parameterized atoms from some dictionary, and one of the primary successes of ANM has been in solving the line spectrum estimation problem in both the complete and missing data cases. Most of the theory for ANM in line spectrum estimation has relied on insight gained from analyzing the dual solution to the ANM problem. \textcolor{black}{However, as far as we know, the general ANM (not NNM) formulation considered in many existing works can only handle frequency estimation in undamped sinusoids~\cite{candes2014towards, tang2013compressed, yang2014exact, Li15, li2019atomic}.} This is due to the existence of an SDP form for ANM when there is no damping contained in the signals. In this work, we aim to fill this gap and identify both the frequencies and damping factors contained in a spectrally sparse signal.

The fact that NNM is a special case of ANM suggests that ANM-type dual analysis can {\em also} be used for NNM. In particular, in this paper, we propose an algorithm for spectral estimation that involves searching for the peaks of the dual polynomial corresponding to the NNM problem. We name this algorithm NN-MUSIC (nuclear norm minimization view of MUSIC), and we highlight the fact that in the full observation case, NN-MUSIC is in fact {\em equivalent} to MUSIC itself. We also provide one such development in this paper: unlike classical MUSIC, the NN-MUSIC algorithm can be naturally generalized to the missing data case, and so we also propose and analyze such a Missing Data MUSIC (MD-MUSIC) algorithm in this paper. \textcolor{black}{MD-MUSIC is not equivalent to first using NNM to complete the missing data and second running conventional MUSIC on the full data matrix; rather, it involves extracting frequency estimates directly from the NNM dual polynomial, and we demonstrate that it can succeed even when ``two-step'' algorithms fail.} Both NN-MUSIC and MD-MUSIC can deal with damped sinusoids. \textcolor{black}{Our simulations also illustrate the advantage of these two proposed algorithms over ANM in frequency estimation of damped sinusoids.}

Using our analytical framework, we also provide exact recovery guarantees for both NN-MUSIC and MD-MUSIC. For NN-MUSIC, our theorem indicates that we can perfectly identify the spectral parameters by searching for the locations in the damping-frequency plane where the $\ell_2$-norm of the dual polynomial achieves $1$, as long as the true spectral parameters are distinct from each other and the number of uniform samples is larger than the number of spectral parameters. For MD-MUSIC, our theory shows that we can perfectly identify the spectral parameters with high probability by searching for the locations in the damping-frequency plane where the $\ell_2$-norm of the dual polynomial achieves $1$ if the number of random samples is sufficiently large, the true spectral parameters are distinct from each other, and the number of uniform samples (from which the random samples are drawn) is larger than the number of spectral parameters. \textcolor{black}{It is even possible to get perfect parameter recovery without exact data recovery.} Moreover, \textcolor{black}{we provide a parameter-specific recovery bound for low-rank matrix recovery of jointly sparse signals, that is, we quantify how the sample complexity depends on the true spectral parameters rather than use certain incoherence properties as in existing literature}.

The remainder of this paper is organized as follows. In Section~\ref{prob}, we introduce both the SMV and MMV settings considered in this paper. In Section~\ref{prior}, we review the classical MUSIC algorithm as well as its variants. In Section~\ref{main}, we offer a novel optimization-based perspective on the MUSIC algorithm by highlighting the fact that the proposed NN-MUSIC algorithm is equivalent to MUSIC in the full observation case. We also generalize it to the missing data case and propose the MD-MUSIC algorithm to support the idea that this connection between NNM and MUSIC could lead to future developments and understanding. The proofs for theoretical guarantees are presented in Section~\ref{proof}. In Section~\ref{nume}, we explore the recovery performance of the proposed NN-MUSIC and MD-MUSIC algorithms with numerical simulations. Finally, we conclude this work and discuss future directions in Section~\ref{conc}.

\section{Signal Models}
\label{prob}

We are interested in identifying the frequencies and damping factors contained in a spectrally sparse signal, which can be a scalar-valued signal in the SMV setting or a vector-valued signal in the MMV setting. We first introduce the SMV and MMV settings that are considered in this work. Throughout this work, we use superscript ``$\circ$'' to denote row vectors, and superscripts $``\top"$ and $``H"$ to denote transpose and conjugate transpose, respectively.

\subsection{Single Measurement Vector (SMV) setting}
\label{smv}

In the SMV setting, a scalar-valued, continuous-time signal is assumed to have the form

\begin{align}
y(t) = x(t) + e(t), ~~x(t) = \sumk c_k r_k^t  e^{j2\pi f_k t}, \label{smvsignal}
\end{align}
where $\{c_k\}$, $\{r_k\}$, $\{f_k\}$ and $e(t)$ are the unknown \textcolor{black}{complex} coefficients, damping ratios, frequency parameters, and additive observation noise, respectively. Such signals appear in many applications, such as radar, sonar, and communications. Without loss of generality, we assume the frequencies $\{f_k\}$ belong to the interval $[0,1)$, the damping ratios $\{r_k\}$ belong to the interval $[0,1]$, the \textcolor{black}{complex} coefficients \textcolor{black}{$c_k\neq 0$}, and $e(t)\sim \CC\NN(0,\sigma^2)$.

\subsection{Multiple Measurement Vector (MMV) setting}
\label{mmv}

In the MMV setting, we consider a vector-valued signal $\y^\circ(t) \in \CCC^{1\times N}$, which is a superposition of $K$ damped sinusoids with additive observation noise  $\e^\circ(t) \in \CCC^{1\times N}$. More precisely,
\begin{align}
\y^\circ(t) = \x^\circ(t) + \e^\circ(t),~~ \x^\circ(t)= \sum_{k=1}^K c_k r_k^t e^{j2\pi f_k t} \bphi_k^\top, \label{signal}
\end{align}
with \textcolor{black}{$c_k\neq0$},
$f_k\in[0,1)$ and $r_k\in[0,1]$ being the $k$-th \textcolor{black}{complex} coefficient,  frequency, and damping factor, respectively. Here, each $\bphi_k\in \CCC^{N}$ is a normalized vector ($\|\bphi_k\|_2=1$) that can be viewed as the mode shape in modal analysis problems~\cite{park2015sampling, li2017atomicIEEE}.

%

Suppose we take $M$ uniform samples and arrange $\y^{\circ}_m = \y^\circ(m)$ as the $m$-th row of a data matrix $\Y\in\CCC^{M\times N}$. Define $\Xs \triangleq \left[(\x_0^\circ)^\top~(\x_1^\circ)^\top~\cdots~(\x_{M-1}^\circ)^\top
\right]^\top$ and $\E \triangleq \left[(\e_0^\circ)^\top~(\e_1^\circ)^\top~\cdots~(\e_{M-1}^\circ)^\top
\right]^\top$ as the noiseless data matrix and observation noise matrix, respectively.
Then, we have
\begin{equation}
\begin{aligned}
\Y &= \left[(\y_0^\circ)^\top~(\y_1^\circ)^\top~\cdots~(\y_{M-1}^\circ)^\top
\right]^\top =\sum_{k=1}^K \ct_k \a(r_k,f_k) \bphi_k^\top +\E 
= \A_{rf} \D_{\ct} \bPhi^\top +\E  = \Xs + \E
\label{data}
\end{aligned}
\end{equation}
with\footnote{Note that we abbreviate $\a(r,f)$ to $\a(f)$ when $r=1$.}
\begin{align}
\!\!\!\!\a(r,\!f) \!\triangleq \!\!
\begin{cases}
\!\!\sqrt{\frac{1-r^2}{1-r^{2M}}} [1~\!r e^{j2\pi f 1}\! \cdots r^{M\!-\!1}\!e^{j 2\pi f (M-1)}]^\top\!\!,\!\!\!\!&r\!<\!1,\\
\!\!\frac{1}{\sqrt{M}} [1~ e^{j2\pi f 1} ~\cdots~e^{j 2\pi f (M-1)}]^\top,&r\!=\!1,
\end{cases} \label{defatom}
\end{align}
and
\begin{align*}
\ct_k \triangleq
\begin{cases}
c_k \sqrt{\frac{1-r_k^{2M}}{1-r_k^2}},~&r_k<1,\\
c_k \sqrt{M},~&r_k=1,
\end{cases}
~~~~~~k=1,\ldots,K.
\end{align*}
In addition, we define
$\A_{rf} \triangleq [\a(r_1,f_1),\cdots,\a(r_K,f_K)]$, $\D_{\ct} \triangleq \change{diag}([\ct_1,\cdots,\ct_K])$, and $\bPhi \triangleq [\bphi_1,\cdots,\bphi_K]$.

Let $\y_n$, $\x_n$ and $\e_n$ denote the $n$-th column of $\Y$, $\Xs$ and $\E$, respectively. It can be seen that
\begin{align}
\y_n \!=\! \x_n+\e_n \!=\!  \sum_{k=1}^K \cb_{n,k} \a(r_k,f_k) + \e_n,~n=1,\ldots,N, \label{mmvsignal}
\end{align}
where $\cb_{n,k} = \ct_k \phi_{n,k}$ with $\phi_{n,k}$ being the $(n,k)$-th entry of $\bPhi$. In this model, the observed data consists of $N$ observed length-$M$ signals, each comprised of $K$ damped sinusoids. The $N$ signals share the same set of unknown frequencies and damping factors, but each has a unique set of coefficients.

\section{Prior Work}
\label{prior}

In this section, we review the classical MUSIC algorithm~\cite{schmidt1981signal, schmidt1986multiple} as well as its two variants, Damped MUSIC (DMUSIC)~\cite{li1999improved} and MUSIC adapted to missing data with Gaussian white noise (denoted as MN-MUSIC)~\cite{suryaprakash2012performance}.

\subsection{MUltiple SIgnal Classification (MUSIC) algorithm}
\label{MUSIC}

\subsubsection{SMV MUSIC via autocorrelation matrix}

By sampling the scalar-valued, continuous-time signal $y(t)$, defined in~(\ref{smvsignal}), at $M$ equally spaced times, one can define a vector $\y(t)\in\CCC^{M}$ as
\begin{align}
\label{eq:vecy}
\y(t) \triangleq \left[ y(t)~y(t+1)~\cdots~y(t+M-1) \right]^\top,
\end{align}
which has the autocorrelation matrix
$
\R_y \triangleq \EEE\{\y(t)\y(t)^H\}.
$

The classical MUSIC algorithm aims to identify the unknown frequencies $\{f_k\}$ by constructing (and then decomposing) an estimate of the autocorrelation matrix $\R_y$ without damping, namely, in the case where all $r_k=1$~\cite{stoica1997introduction}. This requires $M > K$. Specifically, consider a full set of uniform observations $\{y(t)\}$ with $t = 0,1,\ldots,L-1$, for some $L > M$. Then, the following sample autocorrelation matrix can be used to approximate $\R_y$:
\begin{align}
\widehat{\R}_y = \frac{1}{L-M+1} \sum_{t = 0}^{L-M} \y(t) \y(t)^H. \label{sampleauto}
\end{align}
Let $[\widehat{\U}_s~\widehat{\U}_n]$ denote the orthonormal eigenvectors of $\widehat{\R}_y$. In particular, suppose $\widehat{\U}_s \in \CCC^{M \times K}$ (signal space) and $\widehat{\U}_n\in\CCC^{M \times (M-K)}$ (noise space) are associated with the $K$ largest eigenvalues and the $M-K$ smallest eigenvalues of $\widehat{\R}_y$, respectively. Then, we summarize the classical MUSIC algorithm in Algorithm~\ref{alg_MUSIC}.
\begin{algorithm}[H]
\caption{MUSIC}\label{alg_MUSIC}
\begin{algorithmic}[1]
\Procedure{Input}{$\{y(t)\}_{t=0}^{L-1},~K$}
\State compute the autocorrelation matrix $\widehat{\R}_y$ as in~(\ref{sampleauto}) and its eigenvectors $\widehat{\U}_n$
\State compute the pseudospectrum: $1/\|\widehat{\U}_n^H \a(f)\|_2^2$, where $\a(f)$ is defined in~\eqref{defatom} with $r=1$
\State localize the $K$
largest local maxima of pseudospectrum to get $\widehat{f}_k$ 
\State \textbf{return} $\widehat{f}_k$
\EndProcedure
\end{algorithmic}
\end{algorithm}

The intuition behind the MUSIC algorithm comes from the fact that, as a consequence of the scalar-valued signal model in~\eqref{smvsignal}, the vector-valued signal $\y(t)$ in~\eqref{eq:vecy} can be expressed as
\begin{align*}
\y(t) =  \sumk \sqrt{M} c_k e^{j2\pi f_k t} \a(f_k) +\bm{e}(t)
=\A_f \c(t) + \e(t)
\end{align*}
with
\begin{align*}
\A_f &= [\a(f_1),~\a(f_2),~\cdots, \a(f_K)],\\
\c(t) &=\sqrt{M}[c_1 e^{j2\pi f_1 t},~c_2 e^{j2\pi f_2 t},~\cdots,c_K e^{j2\pi f_K t}]^\top,\\
\e(t) &= [e(t),~e(t+1),~\cdots,e(t+M-1)]^\top,
\end{align*}
where $\a(f)$ is defined in~\eqref{defatom} with $r=1$.
Then, the autocorrelation matrix becomes
\begin{align*}
\R_y = \EEE\{\y(t)\y(t)^H\} = \A_f \R_c\A_f^H+\sigma^2\I_M
\end{align*}
if $\c(t)$ is uncorrelated with $\e(t)$.
Here, $\R_c \triangleq  \EEE\{\c(t)\c(t)^H\} $ is the autocorrelation matrix of $\c(t)$ and $\I_M$ denotes the $M\times M$ identity matrix.
Note that the coefficients $\{c_k\}$ may be uncorrelated ($\R_c$ is diagonal) or may contain completely correlated pairs ($\R_c$ is singular).
We are interested in the first case, namely, $\R_c$ is diagonal and positive definite \textcolor{black}{since $c_k\neq0$}.\footnote{As is stated in~\cite{schmidt1986multiple}, in general, $\R_c$ will be ``merely" positive definite to reflect the arbitrary degrees of pair-wise correlations occurring between the coefficients.}
On the other hand, the rank of $\A_f$ is $K$ when all the frequencies $\{f_k\}$ are distinct and $M\geq K$. It follows that the rank of $\A_f \R_c\A_f^H$ is $K$. Let $\{\lambda_m\},~m=1,\ldots,M$ denote the non-increasing eigenvalues of $\A_f \R_c\A_f^H$. Then, we have
$
\lambda_{K+1} = \cdots = \lambda_M = 0.
$
As a consequence, the determinant of $\A_f \R_c\A_f^H$ is
$
\change{det}(\A_f \R_c\A_f^H) = \change{det}(\R_y-\sigma^2 \I_M) = 0,
$
which implies that
$
\lambda^y_m= \sigma^2,~m=K+1,\ldots,M,
$
where $\lambda^y_m$ is the $m$-th non-increasing eigenvalue of $\R_y$. Denoting $\u_m$ as the $m$-th eigenvector of $\R_y$ corresponding to eigenvalue $\lambda^y_m$, we have
\begin{align}
\R_y \u_m = \lambda^y_m \u_m,~m=1,\ldots,M. \label{eigRy}
\end{align}
Replacing $\R_y = \A_f \R_c\A_f^H+\sigma^2\I_M$ into the above equation~(\ref{eigRy}), we have
\begin{align*}
\A_f \R_c\A_f^H \u_m= (\lambda^y_m - \sigma^2) \u_m = \zero~\change{or}~\A_f^H \u_m=\zero
\end{align*}
when $\lambda^y_m = \sigma^2$, or equivalently, $m=K+1,\ldots,M$. Then, $\a(f)$, which is defined in~\eqref{defatom}, is orthogonal to $\u_m,~m=K+1,\ldots,M$ (columns of $\widehat{\U}_n$), when $f=f_k,~k=1,\ldots,K$. Therefore, we can identify the frequencies by localizing the $K$ peaks of the pseudospectrum $1/\|\widehat{\U}_n^H \a(f)\|_2^2$.

\subsubsection{SMV MUSIC via Hankel matrix}

As an alternative to the above autocorrelation matrix, a certain Hankel matrix can also be used in the MUSIC algorithm~\cite{liao2016music}.\footnote{Indeed, Hankel structure has been widely used in a variety of algorithms for spectral estimation in the literature~\cite{yang1996rank, scharf2001toeplitz, andersson2014new, andersson2019fixed}.} In particular, from the same full set of uniform observations $\{y(t)\}$ with $t = 0,1,\ldots,L-1$, one can formulate the Hankel matrix
\begin{align}
\HH_y = \left[
\begin{array}{cccc}
y(0) & y(1) & \cdots& y(N-1)\\
y(1) & y(2) & \cdots& y(N)\\
\vdots&\vdots& &\vdots\\
y(M-1) & y(M) & \cdots& y(L-1)\\
\end{array}
\right] \label{hankelmusic}
\end{align}
for some positive integers $M$ and $N$ satisfying $M+N=L+1$.
Then define the noise-space correlation function $R(f)$ and imaging function $J(f)$ as
\begin{align*}
R(f) = \|\U_n^H \a(f)\|_2, \quad J(f) = \frac{1}{\|\U_n^H \a(f)\|_2}
\end{align*}
with
$
\a(f)=\frac{1}{\sqrt{M}}[1~e^{j2\pi f 1} ~\cdots~e^{j 2\pi f (M-1)}]^\top
$
as defined in~\eqref{defatom}.
Here, $\U_n$ spans the noise subspace and contains the left singular vectors of $\HH_y$ corresponding to the $M-K$ smallest singular values.
The frequencies can then be estimated by identifying the $K$ local minima of the noise-space correlation function $R(f)$ or the $K$ local maxima of the imaging function $J(f)$.

Note that the sample autocorrelation matrix in~\eqref{sampleauto} and the Hankel matrix in~\eqref{hankelmusic} are related by
$
\widehat{\R}_y = \frac{1}{L-M+1}\HH_y \HH_y^H.
$
Thus, the eigenvectors of $\widehat{\R}_y$ are the same as the left singular vectors of $\HH_y$ up to a unitary transform. Therefore, the MUSIC algorithm based on the autocorrelation matrix and the Hankel matrix are equivalent since the imaging function $J(f)$ is equivalent to the pseudospectrum in Algorithm~\ref{alg_MUSIC}.

\subsubsection{MMV MUSIC via data matrix}
\label{MUSIC_MMV}

The MUSIC algorithm is also widely used in MMV problems~\cite{liao2014music,iordache2014music, kim2012compressive}.  
\textcolor{black}{Given a multiple measurement matrix $\Y = [\y_1,\cdots,\y_N]$}
(see Section~\ref{MN-MUSIC}), one can directly compute an SVD of $\Y$ to obtain the noise space $\U_n$ from the left singular vectors of $\Y$ and then identify the frequency parameters by localizing the peaks of the imaging function. In particular, denote
$
\Y = [\U_s~\U_n] ~\bSigma~ [\V_s~\V_n]^H
$
as an SVD of the data matrix $\Y$. For the same reason, one can estimate the frequencies by finding the peaks of the imaging function
$
J(f) = \frac{1}{\|\U_n^H \a(f)\|_2}.
$

\subsection{Damped MUSIC (DMUSIC)}
\label{DMUSIC}

In the general model of~\eqref{smvsignal}, the complex-valued sinusoids are damped and decay over time. For this more general case, the DMUSIC algorithm introduced in~\cite{li1999improved} aims to estimate {\em both} the frequencies $\{f_k\}$ and damping ratios $\{r_k\}$ directly using the rank-deficiency and Hankel properties of~(\ref{hankelmusic}). Similar to classical MUSIC, DMUSIC involves constructing the noise subspace matrix $\U_n$ by computing an SVD of the Hankel matrix $\HH_y$. Then, the $(r_k,f_k)$ pairs are identified by finding the peaks of the imaging function
\begin{align}
J(r,f) = \frac{1}{\|\U_n^H \a(r,f)\|_2} \label{Jrf}
\end{align}
with $\a(r,f)$ defined in~\eqref{defatom}.

The intuition behind DMUSIC is that the Hankel matrix in~(\ref{hankelmusic}) can be rewritten as
\begin{align*}
\HH_y = \A_{rf} \D_c (\A_{rf}^N)^\top + \HH_e,
\end{align*}
where $ \HH_e$ is a Hankel matrix formulated with $\{e(t)\},~t=0,\ldots,L-1$, and $\D_c$ is a diagonal matrix with diagonal entries being the scaled coefficients $c_k$. Precisely, the $k$-th diagonal entry of $\D_c$ is $ \frac{c_k \sqrt{(1-r_k^{2M})(1-r_k^{2N})} }{   (1-r_k^2) }$.  $\A_{rf}$ and $\A_{rf}^N$ are Vandermonde matrices defined as
\begin{align*}
\A_{rf} \triangleq  [\a(r_1,f_1),\cdots,\a(r_K,f_K)],\quad
\A_{rf}^N \triangleq  [\a_N(r_1,f_1),\cdots,\a_N(r_K,f_K)],
\end{align*}
with
\begin{align}
\a(r,f) & \!\triangleq\!\! \sqrt{\! \frac{1-r_k^2}{1-r_k^{2M}} }  [1~r e^{j2\pi f 1} \cdots r^{M-1}e^{j 2\pi f (M-1)}]^\top, \nonumber\\
\!\!\!\a_N(r,f) &\!\triangleq\!\!  \sqrt{\! \frac{1-r_k^2}{1-r_k^{2N}} } [1~r e^{j2\pi f 1} \cdots r^{N-1}e^{j 2\pi f (N-1)}]^\top. \label{defatomN}
\end{align}
Note that we add a subscript ``$N$" in~\eqref{defatomN} to distinguish $\a_N(r,f) \in \CCC^{N}$ from $\a(r,f)\in \CCC^{M}$.
When $M,N\geq K$ and all the $(r_k,f_k)$ pairs are distinct, $\A_{rf}$ and $\A_{rf}^N$ are full rank. Then, $\HH_x \triangleq \A_{rf} \D_c (\A_{rf}^N)^\top$ is of rank $K$. Now, consider the case when there is no noise, i.e., $\HH_y = \HH_x$. Denote an SVD of $\HH_y$ as
\begin{align*}
\HH_y = [\U_s~\U_n] \bSigma \left[
\begin{array}{c}
\V_s^H\\
\V_n^H
\end{array}
\right].
\end{align*}
One can show that the range spaces of $\HH_y$, $\A_{rf}$, and $\U_s$ are all equal when there is no noise. Then, $\a(r,f)$ is orthogonal to the columns of $\U_n$ when $(r,f)=(r_k,f_k),~k=1,\ldots,K$. If noise exists, the orthogonal relationship between $\a(r,f)$ and $\U_n$ no longer holds. However, one can identify all the $(r_k,f_k)$ pairs by finding the peaks of the imaging function defined in~(\ref{Jrf}), that is, searching for $\a(r,f)$ that are most nearly orthogonal to the noise space $\U_n$.

\subsection{MN-MUSIC for missing and noisy data}
\label{MN-MUSIC}

The classical MUSIC algorithm has also been adapted to the missing data case with Gaussian white noise (denoted as MN-MUSIC) for applications such as direction of arrival (DOA) estimation~\cite{suryaprakash2012performance}. The authors consider the MMV setting as introduced in Section~\ref{MUSIC_MMV}. More precisely, consider an observed $M \times N$ matrix $\Y = [\y_1,\cdots,\y_N]$, where $\y_n$ is defined in~\eqref{mmvsignal} and repeated as follows
\begin{align*}
\y_n = \sum_{k=1}^K \cb_{n,k} \a(f_k) + \e_n,~~n=1,\ldots,N,
\end{align*}
with $r=1$ since undamped signals are considered in~\cite{suryaprakash2012performance}.

Assume we partially observe the entries of $\Y$ with i.i.d. Bernoulli randomly sampled locations $\Omega \subset \{1, \ldots, M\} \times \{1,\ldots, N \}$. Let $\Y_\Omega$ be the projection matrix of $\Y$ on the index set $\Omega$, i.e
\begin{align*}
{\Y_\Omega}_{ij}=
\begin{cases}
\Y_{ij},~~ &(i,j) \in \Omega,\\
0,~~&else.
\end{cases}
\end{align*}
Then in MN-MUSIC, an SVD is directly performed on $\Y_\Omega$ to get the signal space matrix $\U_s$, which contains the left singular vectors of $\Y_\Omega$ corresponding to the $K$ largest singular values. Finally, the frequencies are estimated by finding the peaks of
$
\|\U_s^H\a(f)\|_2^2,
$
which is essentially same as in Sections~\ref{MUSIC} and~\ref{DMUSIC}.

\section{Main Results}
\label{main}

In this section we outline a connection between MUSIC and low-rank matrix optimization using nuclear norm minimization (NNM), and based on this connection we propose an extension of MUSIC that is appropriate for the missing data case. Our interest in NNM here is specifically due to its connection with MUSIC. There are, of course, alternative low-rank optimization problems that can also be used for spectral analysis. Among these, atomic norm minimization (ANM) has been proposed and analyzed for solving the undamped line spectrum estimation problem in both the full and missing data cases~\cite{yang2014exact, tang2013compressed}. Moreover, a low-rank Hankel matrix recovery problem has recently been considered for damped spectral analysis~\cite{xu2018sep}; that work involves solving the NNM~\eqref{NNM} and~\eqref{NNMmiss} with an extra Hankel constraint on $\X$. While these alternative frameworks have some benefits, we believe that our work sheds light on a more fundamental problem, given the considerable attention that MUSIC has received over the last several decades. This understanding may lead to new developments for MUSIC and other optimization algorithms for spectral analysis in the future.

\subsection{Optimization connection to MUSIC in the full data case}
\label{full}

In this section, we consider both the SMV and MMV settings. Given a set of uniform samples from the signal model~\eqref{smvsignal} in the SMV setting and the data matrix $\Xs = \A_{rf} \D_{\ct} \bPhi^\top$ or its noisy version $\Y$~\eqref{data} in the MMV setting, our goal is to identify the frequencies $\{f_k\}$ and damping factors $\{r_k\}$. Note that in the SMV setting, we can construct a Hankel matrix as in~\eqref{hankelmusic}. As is shown in Section~\ref{DMUSIC}, this Hankel matrix $\HH_x$ can be decomposed as $\HH_x \triangleq \A_{rf} \D_c (\A_{rf}^N)^\top$ and is of rank $K$ when there is no noise. One can observe that both $\Xs$ and $\HH_x$ are low-rank matrices and have the same type of decompositions. Therefore, the analysis on $\Xs$ can also be applied to $\HH_x$, which implies that the algorithms we build using $\Xs$ in the MMV scenario also work for the SMV scenario.

Assume that $\Xs$ is given and $K \ll M,N$, note that $\Xs$ in~(\ref{data}) is low rank. Inspired by the low-rank property of $\Xs$ and the dual analysis that is commonly used in atomic norm minimization (ANM)~\cite{candes2014towards,tang2013compressed}, let us consider the following nuclear norm minimization (NNM)
\begin{align}
\Xh = \arg\min_{\X}~\|\X\|_* ~~\change{s.t.}~\X = \Xs. \label{NNM}
\end{align}
Although this problem has a trivial solution (namely, $\Xh = \Xs$), it is interesting because we can compute the corresponding dual feasible point $\Q$, which is a solution of the dual problem, via the Lagrange function of~(\ref{NNM}) and thus identify the frequencies and damping factors that are contained in the spectrally sparse signal $\x^\circ(t)$ in~(\ref{signal}). In particular, the Lagrange function is given as
\begin{align*}
\LL(\X,\Q) = \|\X\|_* + \langle \Xs-\X,\Q \rangle_{\RRR} =  \|\X\|_* - \langle \X,\Q \rangle_{\RRR}  + \langle \Xs,\Q \rangle_{\RRR},
\end{align*}
with $\Q$ being the dual variable. $\langle \cdot,\cdot \rangle_{\RRR}$ is defined as the real inner product, i.e.,
 \begin{align*}
 \langle \Xs,\Q \rangle_{\RRR} = \text{Re}(\langle \Xs,\Q \rangle) = \text{Re}(\text{Tr}(\Q^H\Xs))
 \end{align*}
 with $\change{Tr}(\cdot)$ denoting the trace of a matrix.
Then, the subgradient of $\LL(\X,\Q)$ with respect to $\X$ is
\begin{align*}
\frac{\partial \LL(\X,\Q)}{\partial \X} = \partial \|\X\|_* - \Q,
\end{align*}
where $\partial \|\X\|_*$ is the subdifferential of the nuclear norm and given as
\begin{align*}
\partial \|\X\|_* &=\partial \|\Xs\|_*  = \{\Z:~\Z = \U_{X^\star}\V_{X^\star}^H+\W,~\U_{X^\star}^H \W = \zero,~\W\V_{X^\star} = \zero,~\|\W\|\leq 1\}
\end{align*}
since $\Xh=\X = \Xs$. Here, we use $\|\W\|$ to denote the spectral norm of the matrix $\W$. Note that $\Xs = \U_{X^\star} \S_{X^\star} \V_{X^\star}^H$ is a truncated SVD of $\Xs$ with $\U_{X^\star}\in \CCC^{M\times K}$, $\S_{X^\star}\in \RRR^{K\times K}$ and $\V_{X^\star}\in \CCC^{N\times K}$.
We can also construct a
$
\Q\in \partial \|\X\|_*
$
by letting $\zero \in \frac{\partial \LL(\Q,\X)}{\partial \X}$ according to the zero-gradient condition in the Karush-Kuhn-Tucker (KKT) conditions~\cite{boyd2004convex}. Finally, we have
\begin{align}
\Q = \U_{X^\star} \V_{X^\star}^H + \W \label{dualsolu}
\end{align}
with $\U_{X^\star}^H \W = \zero,~\W\V_{X^\star} = \zero,~\|\W\|\leq 1$. Note that the dual solution to~\eqref{NNM} is not unique. In particular, one can verify that any $\Q\in\partial \|\Xs\|_*$ is a dual solution since $\langle \Xs,\Q \rangle_{\RRR} = \|\Xs\|_*$.\footnote{When using CVX with the default solver SDPT3 to solve the SDP form of the NNM problem~\eqref{NNM}, we observe that it returns both the primal solution and a minimum norm dual solution, namely, $\Q = \U_{X^\star} \V_{X^\star}^H$, due to the use of a Conjugate Gradient based algorithm~\cite{toh2012implementation,hayami2018convergence}. } 

Given a dual feasible point $\Q= \U_{X^\star} \V_{X^\star}^H \textcolor{black}{+\W}$, we define the dual polynomial as
\begin{align}
\QQ(r,f) \triangleq \Q^H \a(r,f), \label{dualpoly}
\end{align}
which is inspired by the dual analysis in ANM. The following theorem guarantees that we can identify the true $r_k$'s and $f_k$'s by localizing the places where  $\|\QQ(r,f)\|_2$ achieves 1. Moreover, it also indicates that one does not need a separation condition in this full data noiseless setting. (In some previous work on optimization-based spectral estimation~\cite{yang2014exact}, one needs the minimum separation $\Delta_f$, which is defined in Corollary~\ref{THMdualploymiss}, to be on the order of $\frac{1}{M}$ even for the full data noiseless setting.)

\begin{Theorem} \label{THMdualploy}
Let $\mathcal{RF}$ denote the set of the true damping factor and frequency pairs, i.e., $$\mathcal{RF} = \{(r_1,f_1), \cdots, (r_K,f_K)\}.$$
Given the full data matrix $\Xs$ as in~(\ref{data}), compute its truncated SVD $\Xs = \U_{X^\star} \S_{X^\star} \V_{X^\star}^H$. \textcolor{black}{For all $\Q \in \partial\|\Xs\|_*$ with $\|\W\|<1$}, 
the dual polynomial defined in~(\ref{dualpoly}) satisfies
\begin{align*}
\|\QQ(r_k,f_k)\|_2 &= 1,~\forall~(r_k,f_k)\in\mathcal{RF}, \\
\|\QQ(r,f)\|_2 &< 1,~\forall~ (r,f)\notin\mathcal{RF},
\end{align*}
if $M\geq K+1$, all the $(r_k,f_k)$ pairs in $\mathcal{RF} $ are distinct, and $\bPhi \in \CCC^{N\times K}$ is of rank $K$.
\end{Theorem}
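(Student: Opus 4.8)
The plan is to exploit the fact that, under the stated rank conditions, the column space of $\U_{X^\star}$ coincides with the column space of the Vandermonde factor $\A_{rf}$, and then to split the dual polynomial into two mutually orthogonal pieces. First I would observe that since $\A_{rf}$ is an $M\times K$ Vandermonde-type matrix with distinct nodes $z_k = r_k e^{j2\pi f_k}$ and $M\geq K+1>K$, it has full column rank $K$; since $\D_{\ct}$ is invertible (each $c_k\neq0$) and $\bPhi$ has rank $K$, the factorization $\Xs = \A_{rf}\D_{\ct}\bPhi^\top$ gives $\mathrm{col}(\Xs)=\mathrm{col}(\A_{rf})$ and hence $\mathrm{col}(\U_{X^\star})=\mathrm{col}(\A_{rf})$. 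Writing $\Pbf = \U_{X^\star}\U_{X^\star}^H$ for the orthogonal projector onto this space, I would use $\U_{X^\star}^H\W=\zero$ and $\W\V_{X^\star}=\zero$ to write
\[
\QQ(r,f) = \Q^H\a(r,f) = \V_{X^\star}\U_{X^\star}^H\a(r,f) + \W^H(\I-\Pbf)\a(r,f),
\]
and to note that the two summands are orthogonal (the first lies in $\mathrm{col}(\V_{X^\star})$, the second in its complement because $\V_{X^\star}^H\W^H=\zero$). This yields the clean identity $\|\QQ(r,f)\|_2^2 = \|\Pbf\a(r,f)\|_2^2 + \|\W^H(\I-\Pbf)\a(r,f)\|_2^2$.

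For the equality at the true parameters, I would use that each $\a(r_k,f_k)$ is a column of $\A_{rf}$ and hence lies in $\mathrm{col}(\U_{X^\star})$, so $(\I-\Pbf)\a(r_k,f_k)=\zero$ and $\|\Pbf\a(r_k,f_k)\|_2=\|\a(r_k,f_k)\|_2=1$ (the normalization in~\eqref{defatom} makes every atom unit-norm), giving $\|\QQ(r_k,f_k)\|_2=1$. For the strict inequality off the true set, I would bound $\|\W^H(\I-\Pbf)\a(r,f)\|_2\leq\|\W\|\,\|(\I-\Pbf)\a(r,f)\|_2$ and combine with $\|\Pbf\a(r,f)\|_2^2+\|(\I-\Pbf)\a(r,f)\|_2^2=1$ to obtain
\[
\|\QQ(r,f)\|_2^2 \leq \|\W\|^2 + \left(1-\|\W\|^2\right)\|\Pbf\a(r,f)\|_2^2.
\]
Since $\|\W\|<1$ by hypothesis, the right-hand side is strictly less than $1$ precisely when $\|\Pbf\a(r,f)\|_2<1$, i.e. when $\a(r,f)\notin\mathrm{col}(\U_{X^\star})=\mathrm{col}(\A_{rf})$.

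It therefore remains to show $\a(r,f)\notin\mathrm{col}(\A_{rf})$ for every $(r,f)\notin\mathcal{RF}$, and this is the step I expect to be the crux. Writing $z=re^{j2\pi f}$, the atoms are scaled Vandermonde vectors $[1,z,\dots,z^{M-1}]^\top$, and the claim is that adjoining the extra node $z$ to the $K$ distinct nodes $\{z_k\}$ keeps the $K+1$ Vandermonde vectors linearly independent in $\CCC^M$. This is exactly where $M\geq K+1$ (rather than merely $M\geq K$) is used: a $(K+1)\times(K+1)$ Vandermonde minor with distinct nodes is nonsingular, so $\a(r,f)$ cannot be a linear combination of $\a(r_1,f_1),\dots,\a(r_K,f_K)$. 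The one point requiring care is that distinctness of the pairs $(r_k,f_k)$ must translate into distinctness of the nodes $z_k$ and of $z$ from each $z_k$; I would verify that $(r,f)\mapsto re^{j2\pi f}$ is injective on the relevant range, handling the degenerate $r=0$ atom (which collapses to the first standard basis vector irrespective of $f$) separately. Assembling these pieces establishes both conclusions of the theorem.
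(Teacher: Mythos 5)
Your proposal is correct and takes essentially the same route as the paper's proof: identify $\mathrm{col}(\U_{X^\star})=\mathrm{col}(\A_{rf})$ from the factorization $\Xs=\A_{rf}\D_{\ct}\bPhi^\top$, split $\|\QQ(r,f)\|_2^2$ into orthogonal pieces using $\U_{X^\star}^H\W=\zero$ and $\W\V_{X^\star}=\zero$, and rule out $\a(r,f)\in\mathrm{col}(\A_{rf})$ for $(r,f)\notin\mathcal{RF}$ via the nonsingularity of a $(K+1)\times(K+1)$ Vandermonde minor with nodes $z_k=r_ke^{j2\pi f_k}$, exactly as the paper does. Your single unified bound $\|\QQ(r,f)\|_2^2\le\|\W\|^2+\left(1-\|\W\|^2\right)\|\Pbf\a(r,f)\|_2^2$ merges the paper's separate $\W=\zero$ and $\W\neq\zero$ cases, and your explicit flag of the degenerate $r=0$ atom (where distinct pairs $(r,f)$ give coincident nodes) is a point of care that the paper's determinant argument passes over silently.
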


The proof of Theorem~\ref{THMdualploy} is given in Section~\ref{proofTHMdualploy}.
\textcolor{black}{Note that for the case when $\|\W\| = 1$, one may get $\|\QQ(r,f)\|_2  = 1$ for some $(r,f)\notin\mathcal{RF}$, i.e., having some false peaks when checking the norm of the dual polynomial. To remove these false peaks, one can solve the following least squares problem. Denote $\{(\rh_k,\fh_k)\}_{k=1}^{\Kh}$ as the damping factor and frequency pairs estimated by localizing the places where $\|\QQ(r,f)\|_2  = 1$. (To localize the places where $\|\QQ(r,f)\|_2 = 1$, one can use the {\em findpeaks} function in Matlab.) Define $\widehat{\A}_{rf} \triangleq  [\a(\rh_1,\fh_1),\cdots,\a(\rh_K,\fh_K)]$. Then the least squares problem 
$$\min_{\bPhi_c} \|\Xs - \widehat{\A}_{rf} \bPhi_c\|_F$$ 
can be used to remove the false peaks. The $(\rh_k,\fh_k)$ pairs corresponding to zero rows of $\bPhi_c$ can be viewed as false estimates.}
  
Based on the above analysis, we propose the following algorithm, named NN-MUSIC (nuclear norm minimization view of MUSIC algorithm), to estimate the damping factors~$\{r_k\}$ and frequencies $\{f_k\}$ of the damped sinusoids from the data matrix $\Xs$. Note that the step with the highest computational cost is the SVD step, and this needs to be performed only once. 
\begin{algorithm}[H]
\caption{NN-MUSIC}\label{alg_NN-MUSIC}
\begin{algorithmic}[1]
\Procedure{Input}{$\Xs \in \CCC^{M \times N}$}
\State compute truncated SVD of $\Xs$: $\Xs = \U_{X^\star} \S_{X^\star} \V_{X^\star}^H$
\State form the dual feasible point: $\Q = \U_{X^\star} \V_{X^\star}^H$
\State form the dual polynomial: $\QQ(r,f) = \Q^H \a(r,f)$
\State localize the places where $\|\!\QQ(r,f)\!\|_2 \!=\! 1$ to get $(\widehat{r}_k,\widehat{f}_k)$
\State \textbf{return} $(\widehat{r}_k,\widehat{f}_k)$
\EndProcedure
\end{algorithmic}
\end{algorithm}

Note that Algorithm~\ref{alg_NN-MUSIC} is essentially equivalent to the MUSIC (in the undamped case) and DMUSIC (in the damped case) algorithms outlined in Section~\ref{prior}. This is due to the fact that $\|\QQ(r,f)\|_2 = \|\U_{X^\star}^H \a(r,f)\|_2$.\footnote{\textcolor{black}{This fact is key to building the connection between MUSIC and nuclear norm minimization.} This is why we construct $\Q = \U_{X^\star} \V_{X^\star}^H$ explicitly in line 3 of Algorithm~\ref{alg_NN-MUSIC}; this corresponds to a choice of $\W=\zero$ in~\eqref{dualsolu}. \textcolor{black}{Note that other constructions of $\Q$ with $\W \neq \zero$ will still work for localizing the $(r,f)$ pairs but will not equal $\|\U_X^H a(r,f)\|_2$.}
}
When there is no noise, the DMUSIC algorithm and its variants characterize the spectral parameters by locating the zeros of a noise-space correlation function or the peaks of the imaging function, and the proposed NN-MUSIC algorithm identifies the spectral parameters by localizing the $(r,f)$ pairs where $\|\QQ(r,f)\|_2$ achieves $1$.
While MUSIC has been classically understood from an algebraic perspective (owing to its closed form), we believe the derivation of NN-MUSIC offers a novel optimization-based perspective on MUSIC that could lead to future developments and understanding.


We also stress that this connection to MUSIC is unique to NNM and does not apply in general to ANM. In particular, the connection arises specifically because the dual feasible point $\Q = \U_{X^\star} \V_{X^\star}^H$ of NNM induces a dual polynomial that satisfies $\|\QQ(r,f)\|_2 = \|\U_{X^\star}^H \a(r,f)\|_2$. On the other hand, the dual feasible point of ANM formulations does not admit the structure $\Q = \U_{X^\star} \V_{X^\star}^H$, in general.



Finally, consider the case when the given data matrix $\Y$ contains some additive white Gaussian noise, i.e.,
$
\Y = \Xs+\E
$
with $\E$ denoting the measurement noise. Then,
we can solve the following nuclear norm denoising program
\begin{align}
\min_{\X}~\frac{1}{2}\|\Y-\X\|_F^2+ \lambda \|\X\|_*, \label{NNMnoise}
\end{align}
where $\lambda$ is a regularization parameter.
As is shown in the simulation, we can estimate the $(r,f)$ pairs by localizing the peaks of the norm of the corresponding dual polynomial.
We leave the robust performance analysis of this framework for future work. \textcolor{black}{In particular, it would be interesting to characterize the parameter estimation performance of program~\eqref{NNMnoise} in terms of the signal-to-noise ratio and the separation of the true frequencies, similar to the analysis in~\cite{li2018approximate}.}

\subsection{Extension to the missing data case}
\label{miss}

Unlike the classical formulation of MUSIC, the optimization-based derivation of NN-MUSIC allows it to be naturally extended to the missing data case. In particular, assume that we partially observe the entries of the full data matrix $\Xs$ in~(\ref{data}) with
uniformly random sampled locations $\Omega \subset \{1, \ldots, M\} \times \{1,\ldots, N \}$. Let $\X_\Omega = \PP_\Omega(\X)$ be the projection matrix of $\X$ on the index set $\Omega$, i.e
\begin{align*}
{\X_\Omega}_{ij}= \PP_{(i,j)}(\X)=
\begin{cases}
\X_{ij},~~ &(i,j) \in \Omega,\\
0,~~&else.
\end{cases}
\end{align*}
Notice that recovering the missing entries of the matrix $\Xs$ reduces to a matrix completion problem~\cite{candes2009exact}, commonly formulated via the following NNM
\begin{align}
\Xh = \arg\min_{\X}~\|\X\|_*  \quad \change{s.t.}~\X_{\Omega} = \Xs_{\Omega},\label{NNMmiss}
\end{align}
which can be solved by the corresponding semi-definite program (SDP)
\begin{equation}
\begin{aligned}
\min_{\X,\T,\textcolor{black}{\D}}~&\frac{1}{2} \change{Tr}(\T)+\frac{1}{2} \change{Tr}(\textcolor{black}{\D}) \quad 
\change{s.t.}~\left[
\begin{array}{cc}
\T&\X\\
\X^H&\textcolor{black}{\D}
\end{array}
\right]\succeq 0, ~\X_{\Omega} = \Xs_{\Omega}.
\label{SDPmiss}
\end{aligned}
\end{equation}

The dual problem of~\eqref{SDPmiss} is given by
\begin{align}
\textcolor{black}{\max_{\Q}~\langle \Q_\Omega,\Xs_\Omega  \rangle_{\RRR} \quad \change{s.t.}~\|\Q_\Omega\|_2 \leq 1.}
\label{dual_prob}
\end{align}
Therefore, we can define the dual polynomial as
$
\QQ(r,f) \triangleq \Q^H \a(r,f),
$
where $\Q$ is the dual solution.
Similar to Theorem~\ref{THMdualploy}, the following theorem guarantees that we can identify the true $r_k$'s and $f_k$'s by localizing the places where  $\|\QQ(r,f)\|_2$ achieves~1.

\textcolor{black}{
\begin{Theorem} \label{THMdataxrfd}
Suppose $\Xs$ is a data matrix of the form~\eqref{data} and all the $(r_k,f_k)$ pairs are distinct. Given the
uniformly partial random observed data matrix $\Xs_{\Omega}$. 
For any data matrix $\Xh$ obtained by solving the NNM problem~\eqref{NNMmiss}, we denote $\U_{\Xh} \S_{\Xh} \V_{\Xh}^H$ as a truncated SVD of $\Xh$. Denote $\Q_\Omega$ as the projection matrix of the dual solution $\Q$ on the index set $\Omega$. Then, $\Q_\Omega$ has the form $\Q_\Omega = \U_{\Xh}\V_{\Xh}^H + \W$ with $\U_{\Xh}^H \W = \zero,~\W\V_{\Xh} = \zero,~\text{and}~\|\W\| \leq 1$ for some matrix $\W$. Here, we consider dual solutions with $\Q = \Q_\Omega$.\footnote{\textcolor{black}{Recall that we observe CVX with the default solver SDPT3 always returns a dual solution with minimal energy, so the entries of $\Q$ outside the index $\Omega$ are always 0, i.e., we have $\Q = \Q_\Omega$.}} If the range space of $\Xh$ contains the range space of $\Xs$ (i.e., $\RR(\Xh) \supseteq \RR(\Xs)$) and $\|\W\|<1$\footnote{\textcolor{black}{Same as in the full data case, when $\|\W\| = 1$, one may get $\|\QQ(r,f)\|_2  = 1$ for some $(r,f)\notin\mathcal{RF}$ with $\a(r,f) \notin \RR(\U_{\Xh})$, i.e., having some false peaks when checking the norm of dual polynomial. Again, one can  remove these false peaks by solving a least squares problem.}}, the dual polynomial $\QQ(r,f) = \Q^H\a(r,f)$ satisfies 
\begin{align*}
\|\QQ(r,f)\|_2 &= 1, \quad\forall~(r,f)\in\mathcal{RF},~\text{or} ~\forall~ (r,f)\notin\mathcal{RF}~\text{with}~\a(r,f) \in \RR(\U_{\Xh}), \\
\|\QQ(r,f)\|_2 &< 1, \quad\forall~ (r,f)\notin\mathcal{RF}~\text{with}~\a(r,f) \notin \RR(\U_{\Xh}).
\end{align*}
\end{Theorem}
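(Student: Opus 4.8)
The plan is to analyze the dual polynomial $\QQ(r,f) = \Q^H \a(r,f)$ directly, exploiting the orthogonality structure of the decomposition $\Q = \Q_\Omega = \U_{\Xh}\V_{\Xh}^H + \W$ guaranteed by the subgradient/KKT conditions stated in the theorem. First I would expand
\[
\QQ(r,f) = \V_{\Xh}\U_{\Xh}^H\a(r,f) + \W^H\a(r,f),
\]
and split $\a(r,f)$ into its component $\a_\parallel = \U_{\Xh}\U_{\Xh}^H\a(r,f)$ lying in $\RR(\U_{\Xh})$ and the orthogonal residual $\a_\perp = (\I - \U_{\Xh}\U_{\Xh}^H)\a(r,f)$. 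Since $\U_{\Xh}^H\W = \zero$ forces $\W^H\a_\parallel = \zero$, the contribution of $\W$ collapses to $\W^H\a_\perp$, so that $\QQ(r,f) = \V_{\Xh}\U_{\Xh}^H\a(r,f) + \W^H\a_\perp$.

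Next I would establish that the two summands are orthogonal: their inner product is $\a(r,f)^H\U_{\Xh}\V_{\Xh}^H\W^H\a_\perp$, which vanishes because $\W\V_{\Xh} = \zero$ yields $\V_{\Xh}^H\W^H = \zero$. A Pythagorean step then gives
\[
\|\QQ(r,f)\|_2^2 = \|\U_{\Xh}^H\a(r,f)\|_2^2 + \|\W^H\a_\perp\|_2^2 = \|\a_\parallel\|_2^2 + \|\W^H\a_\perp\|_2^2,
\]
where I use that $\V_{\Xh}$ has orthonormal columns. Invoking the unit-norm normalization $\|\a(r,f)\|_2 = 1$ built into~\eqref{defatom}, I substitute $\|\a_\parallel\|_2^2 = 1 - \|\a_\perp\|_2^2$ to reach the key identity
\[
\|\QQ(r,f)\|_2^2 = 1 - \left(\|\a_\perp\|_2^2 - \|\W^H\a_\perp\|_2^2\right).
\]

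Finally I would close the argument by a case analysis on whether $\a_\perp$ vanishes. If $\a(r,f)\in\RR(\U_{\Xh})$ then $\a_\perp = \zero$ and the identity gives $\|\QQ(r,f)\|_2 = 1$; this covers every true pair once I verify $\a(r_k,f_k)\in\RR(\Xs)\subseteq\RR(\Xh) = \RR(\U_{\Xh})$, using the hypothesis $\RR(\Xh)\supseteq\RR(\Xs)$, the identification of $\RR(\U_{\Xh})$ with $\RR(\Xh)$ through the truncated SVD, and the fact that $\D_{\ct}\bPhi^\top$ has full row rank $K$ (so $\RR(\Xs) = \RR(\A_{rf})$, which contains each $\a(r_k,f_k)$). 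If instead $\a(r,f)\notin\RR(\U_{\Xh})$ then $\a_\perp\neq\zero$, and the spectral-norm bound $\|\W^H\a_\perp\|_2 \leq \|\W\|\,\|\a_\perp\|_2$ together with $\|\W\| < 1$ makes the parenthesized term strictly positive, so $\|\QQ(r,f)\|_2 < 1$.

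The computation itself is routine; the step that carries the real weight is recognizing that the two orthogonality conditions on $\W$ are exactly what decouples the dual polynomial into an energy-preserving in-range part and a strictly contractive out-of-range part, and that the hypothesis $\RR(\Xh)\supseteq\RR(\Xs)$ is precisely what pins the true atoms onto the plateau $\{\|\QQ\|_2 = 1\}$ even when $\Xh$ differs from $\Xs$. I expect the only genuinely delicate points to be confirming the structural identity $\RR(\Xs) = \RR(\A_{rf})$ and handling the plateau case $\a(r,f)\in\RR(\U_{\Xh})\setminus\mathcal{RF}$ honestly, since these spurious unit-norm peaks are unavoidable at this level of generality and must be acknowledged (via the least-squares postprocessing noted after the statement) rather than excluded.
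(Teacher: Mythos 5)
Your proposal is correct and follows essentially the same route as the paper's own proof: both exploit the subgradient structure $\Q = \Q_\Omega = \U_{\Xh}\V_{\Xh}^H + \W$ with its two orthogonality conditions to obtain the Pythagorean split $\|\QQ(r,f)\|_2^2 = \|\U_{\Xh}^H\a(r,f)\|_2^2 + \|\W^H\a(r,f)\|_2^2$, place the true atoms in $\RR(\U_{\Xh})$ via $\RR(\Xs)=\RR(\A_{rf})\subseteq\RR(\Xh)=\RR(\U_{\Xh})$, and then argue by cases on whether $\a(r,f)\in\RR(\U_{\Xh})$, acknowledging the possible spurious unit-norm plateau exactly as the theorem does. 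Your handling of the strict-inequality case via $\|\W^H\a_\perp\|_2\le\|\W\|\,\|\a_\perp\|_2<\|\a_\perp\|_2$ with $\a_\perp\neq\zero$ is a marginally tidier packaging than the paper's truncated-SVD-of-$\W$ computation (and quietly covers the corner case where $\a(r,f)$ is orthogonal to the left singular vectors of $\W$), but it is the same underlying argument.
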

}

\textcolor{black}{
The above theorem is proved in Section~\ref{proofTHMdataxrfd}. 
Note that one can even localize the true $(r,f)$ pairs from the dual polynomial when we do not have perfectly recovered data matrix (i.e., $\Xh \neq \Xs$).  
For the case when we do have perfect data recovery, i.e., $\Xh = \Xs$, we can further conclude that $\|\QQ(r,f)\|_2 = 1$ only when $(r,f)\in\mathcal{RF}$ and $\|\QQ(r,f)\|_2 < 1$ as long as $(r,f)\notin\mathcal{RF}$. Moreover, we can also quantify sample complexity needed for perfect data recovery in terms of the explicit parameters such as damping ratios and frequencies instead of some incoherence property. We summarize these results in the following corollary.
}


\begin{Corollary} \label{THMdualploymiss}
Suppose $\Xs$ is a data matrix of the form~\eqref{data} and all the $(r_k,f_k)$ pairs are distinct. Given the
uniformly partial random observed data matrix $\Xs_{\Omega}$, suppose
$
|\Omega| \geq c_1 \mu_1 c_s K \log^4(MN)
$
for some numerical constants $c_1 > 0$ and $c_s \triangleq \max\{M,N\}$. Here,
$
\mu_1\geq
\max \left\{\frac{M} {\LL(M,r,f)}, \frac{\mu_2}{\sigma_{\min}^2(\bPhi)} \right\}
$
denotes an incoherence parameter with
$
\mu_2 \triangleq \max_{1\leq n \leq N} \left(  \sum_{k=1}^K |\phi_{nk}|^2 \right)\frac N K,
$
$\LL(M,r,f)$ is a function of $M$, $r$, $f$ and defined as
$
\LL(M,r,f) \triangleq \min\limits_{1\leq k \leq K} \frac{1}{r_k} \left[ \gamma_M(r_k)-\frac{c_2}{ \Delta_f} (1+r_k^{2M}) \right]
$
with $c_2$ being a constant and
\begin{align*}
\gamma_M(r_k) \triangleq \begin{cases}
\frac{r_k^{2M}-1}{2 \log(r_k)},~~&r_k < 1,\\
M, &r_k = 1,
\end{cases}
\end{align*}
and $\Delta_f \triangleq \min_{k\neq l} |f_k-f_l|$ denotes the minimum separation between true frequencies, where $|f_k-f_l|$ is the wrap-around distance on the unit circle. Then, $\Xs$ is the unique solution of~\eqref{NNMmiss} with probability at least $1-(MN)^{-2}$.
\textcolor{black}{Moreover, when $\textcolor{black}{\Xh =} \Xs$ is the unique solution to~\eqref{NNMmiss}, the dual solution $\Q$ has the form $\Q = \U_{\Xs}\V_{\Xs}^H + \W$, where $\U_{X^\star}^H \W = \zero,~\W\V_{X^\star} = \zero,~\text{and}~\|\W\| \leq 1$. In this case, if $\|\W\| < 1$, then the dual polynomial satisfies \begin{align*}
\|\QQ(r_k,f_k)\|_2 &= 1,~\forall~(r_k,f_k)\in\mathcal{RF}, \\
\|\QQ(r,f)\|_2 &< 1,~\forall~ (r,f)\notin\mathcal{RF}.
\end{align*}
}
\end{Corollary}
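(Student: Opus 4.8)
The plan is to handle the corollary's two assertions separately: first the probabilistic exact-recovery bound, and then the refined dual-polynomial dichotomy that holds once $\Xh=\Xs$. For the recovery bound I would invoke a standard matrix-completion exact-recovery theorem under uniformly random sampling, which certifies that a rank-$K$ matrix is the unique minimizer of \eqref{NNMmiss} with probability at least $1-(MN)^{-2}$ as soon as $|\Omega|\gtrsim \mu_0 K \max(M,N)\log^4(MN)$, where $\mu_0$ is the coherence of the column and row spaces of $\Xs$. The whole task then reduces to bounding $\mu_0$ by the explicit quantity $\mu_1$. Writing $\Xs=\A_{rf}\D_{\ct}\bPhi^\top$, its column space is $\RR(\A_{rf})$ and its row space is (the conjugate of) $\RR(\bPhi)$, so the coherence splits into a left part $\max_m\|P_{\RR(\A_{rf})}\bm{e}_m\|_2^2$ and a right part $\max_n\|P_{\RR(\bPhi)}\bm{e}_n\|_2^2$.

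The right part is routine: using $P_{\RR(\bPhi)}=\bPhi(\bPhi^H\bPhi)^{-1}\bPhi^H$ and $(\bPhi^H\bPhi)^{-1}\preceq\sigma_{\min}^{-2}(\bPhi)\I$ gives $\|P_{\RR(\bPhi)}\bm{e}_n\|_2^2\le \sigma_{\min}^{-2}(\bPhi)\sum_{k=1}^K|\phi_{nk}|^2$, which after the normalization $\mu_2=\max_n(\sum_k|\phi_{nk}|^2)\tfrac{N}{K}$ is exactly $\tfrac{K}{N}\,\mu_2/\sigma_{\min}^2(\bPhi)$, matching the second argument in the maximum defining $\mu_1$. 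The main obstacle is the left part, i.e.\ controlling the coherence of the damped-Vandermonde atom space. Here I would write $P_{\RR(\A_{rf})}=\A_{rf}\G^{-1}\A_{rf}^H$ with Gram matrix $\G=\A_{rf}^H\A_{rf}$, so that $\|P_{\RR(\A_{rf})}\bm{e}_m\|_2^2=\bm{e}_m^H\A_{rf}\G^{-1}\A_{rf}^H\bm{e}_m$. The diagonal entries of $\G$ are the damped geometric sums that produce $\gamma_M(r_k)$, while the off-diagonal inner products $\langle\a(r_k,f_k),\a(r_l,f_l)\rangle$ are geometric series in $r_kr_le^{j2\pi(f_k-f_l)}$ and are bounded by a term of order $\tfrac{1}{\Delta_f}(1+r^{2M})$. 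A Gershgorin-type estimate then lower-bounds the spectrum of $\G$ and controls $\|\G^{-1}\|$, and combining this with the per-row magnitudes of the atoms yields $\max_m\|P_{\RR(\A_{rf})}\bm{e}_m\|_2^2\le \tfrac{K}{M}\cdot\tfrac{M}{\LL(M,r,f)}$, which is precisely the first argument defining $\mu_1$. The delicate point is keeping the Gershgorin bound nonvacuous, which is exactly why $\LL(M,r,f)$ must be positive and why the separation $\Delta_f$ enters the estimate; taking $\mu_0=\mu_1$ as the maximum of the two parts then closes the recovery argument.

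For the dual-polynomial statement I would specialize Theorem~\ref{THMdataxrfd} to the case $\Xh=\Xs$, so that $\RR(\U_{\Xh})=\RR(\U_{\Xs})=\RR(\A_{rf})$ and the dual solution takes the stated form $\Q=\U_{\Xs}\V_{\Xs}^H+\W$ with $\U_{\Xs}^H\W=\zero$, $\W\V_{\Xs}=\zero$, and $\|\W\|\le1$. Using $\W\V_{\Xs}=\zero$ I would note that the two terms of $\QQ(r,f)=\V_{\Xs}\U_{\Xs}^H\a(r,f)+\W^H\a(r,f)$ are orthogonal, giving $\|\QQ(r,f)\|_2^2=\|\U_{\Xs}^H\a(r,f)\|_2^2+\|\W^H\a(r,f)\|_2^2$. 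Decomposing $\a(r,f)=\a_\parallel+\a_\perp$ along $\RR(\U_{\Xs})$ and its orthogonal complement and using $\W^H\U_{\Xs}=\zero$ shows $\|\QQ(r,f)\|_2^2\le\|\a_\parallel\|_2^2+\|\W\|^2\|\a_\perp\|_2^2\le\|\a(r,f)\|_2^2=1$, with equality precisely when $\a_\perp=\zero$, i.e.\ when $\a(r,f)\in\RR(\A_{rf})$, once $\|\W\|<1$.

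The final step, which upgrades the dichotomy of Theorem~\ref{THMdataxrfd} to the clean statement in the corollary, is the Vandermonde nondegeneracy fact already used for Theorem~\ref{THMdualploy}: since $M\ge K+1$ and the $(r_k,f_k)$ are distinct, any $K+1$ distinct atoms $\a(r,f)$ are linearly independent, so the only unit atoms lying in $\RR(\A_{rf})$ are the true $\a(r_k,f_k)$. Consequently $\a(r,f)\in\RR(\A_{rf})$ forces $(r,f)\in\mathcal{RF}$, which eliminates the false-peak branch of Theorem~\ref{THMdataxrfd} and yields $\|\QQ(r_k,f_k)\|_2=1$ for $(r_k,f_k)\in\mathcal{RF}$ and $\|\QQ(r,f)\|_2<1$ for all $(r,f)\notin\mathcal{RF}$. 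I expect the coherence bound for the damped atom space to be the genuinely hard part; the dual-polynomial claim is largely bookkeeping on top of the already-proven Theorem~\ref{THMdataxrfd} and the Vandermonde argument.
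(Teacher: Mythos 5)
Your overall architecture matches the paper's proof quite closely: the paper also reduces everything to a tangent-space coherence bound (it adapts Lemma~1 of Chen--Chi's structured matrix completion framework, verifying $\|\PP_T - \tfrac{NM}{|\Omega|}\PP_T\PP_\Omega\PP_T\|\le \tfrac12$ via its Lemma~\ref{lem_incoherence}, which bounds $\|\PP_T(\A_{(n,m)})\|_F^2$ exactly by splitting into the column-space term controlled by $\sigma_{\min}(\bPhi^H\bPhi)$ and $\mu_2$, and the row-space term controlled by $\sigma_{\min}(\At_{rf}^H\At_{rf})$), and your treatment of the $\bPhi$ part is essentially identical to the paper's. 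Your dual-polynomial argument is also the paper's: specialize the subdifferential characterization to $\Xh=\Xs$, write $\Q=\U_{\Xs}\V_{\Xs}^H+\W$, use the orthogonal split of $\|\QQ(r,f)\|_2^2$ together with $\|\W\|<1$, and invoke the Vandermonde rank argument from the proof of Theorem~\ref{THMdualploy} to rule out atoms $\a(r,f)\in\RR(\A_{rf})$ with $(r,f)\notin\mathcal{RF}$. One minor caveat on the completion step: if you invoke an off-the-shelf theorem, it must be one that needs only leverage-score coherence (e.g., the incoherence-optimal results the paper cites as~\cite{chen2015completing}), since nothing here controls the joint incoherence $\|\U\V^H\|_\infty$ required by older guarantees; the paper sidesteps this by following the proof of Theorem~1 in~\cite{chen2014robust} directly.

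The genuine gap is in the piece you yourself flag as the hard part: the lower bound $\sigma_{\min}(\At_{rf}^H\At_{rf})\ge \tfrac1M\LL(M,r,f)$ is not obtainable by the Gershgorin sketch you describe. The paper does not prove this bound at all; it imports it as Theorem~5 of~\cite{aubel2017vandermonde}, an extremal-function (large-sieve type) estimate for Vandermonde matrices with nodes in the unit disk. Your Gershgorin route fails to reproduce the stated form for two reasons. First, while each off-diagonal entry of the Gram matrix is indeed $\OO\bigl(\tfrac{1+r^{2M}}{\Delta_f}\bigr)$, Gershgorin needs the \emph{row sum} over the $K-1$ other nodes; with minimum separation $\Delta_f$ the pairwise frequency distances are at best $\Delta_f, \Delta_f, 2\Delta_f, 2\Delta_f,\dots$, so the row sum is of order $\tfrac{\log K}{\Delta_f}$, and you would only prove the corollary with $c_2$ replaced by $c_2\log K$ --- a strictly weaker statement than the one claimed, where $c_2$ is a universal constant. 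Second, the shape of $\LL$ would not emerge: the diagonal entries of the (unnormalized) Gram matrix are $\sum_{m=0}^{M-1}r_k^{2m}=\tfrac{1-r_k^{2M}}{1-r_k^2}$, which dominates but does not equal $\gamma_M(r_k)$, and the $\tfrac1{r_k}$ prefactor in $\LL(M,r,f)$ is an artifact of the extremal-function machinery, not of a diagonal-dominance argument. So to close the proof as stated you must cite (or reprove) the Aubel--B\"olcskei bound rather than argue via Gershgorin; apart from this, your proposal is sound.
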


The proof for Corollary~\ref{THMdualploymiss} relies on some of the results in~\cite{chen2014robust}. However, those results do not extend directly to the damped exponential case. \textcolor{black}{Rather than use a certain incoherence property as in~\cite{chen2014robust}, we incorporate the damping ratios $\{r_k\}$ into the signal and develop theoretical guarantees that explicitly depend on the parameters, i.e., damping ratios and minimum frequency separation.}
In particular, we explicitly
bound the minimal singular value of $\At_{rf}^H \At_{rf}$ with the function $\LL(M,r,f)$ by exploiting the Vandermonde structure of $\At_{rf}$~\cite{aubel2017vandermonde}, instead of giving an incoherence property depending on just  the minimal singular value of $\At_{rf}^H \At_{rf}$ as in~\cite{chen2014robust}. Note that $\LL(M,r,f) = M-\frac{2c_2}{ \Delta_f} $ is on the order of $M$ when there is no damping (i.e., $r=1$) and the frequencies are well separated ($\Delta_f = \OO(\frac 1 M)$).
In the case where $r<1$, the sample complexity $|\Omega|$ scales inversely with $\LL(M,r,f) = \min\limits_{1\leq k \leq K} \frac{1}{r_k} \left[ \frac{r_k^{2M}-1}{2 \log(r_k)}-\frac{c_2}{ \Delta_f} (1+r_k^{2M}) \right]$, which increases monotonically with $r_k$ when the constant $c_2$ is sufficiently small. Therefore, it can be seen that the sample complexity $|\Omega|$ decreases if the minimum frequency separation $\Delta_f$ increases or the the damping ratio $r_k$ increases.
 Please see Section~\ref{proofTHMdualploymiss} for details  and Section~\ref{test_coherence} for supporting experiments. 

Note that the set $\Omega$ is chosen uniformly at random from all subsets of $\{1, \ldots, M\} \times \{1,\ldots, N \}$ with a given cardinality $|\Omega|$.
Since the columns of $\bPhi\in\CCC^{N\times K}$ are assumed to be normalized, we have $1\leq\mu_2\leq N$ according to the definition of $\mu_2$. In particular, $\mu_2=1$ when all the entries of $\bPhi$ have magnitude $\frac{1}{\sqrt{N}}$ and $\mu_2 = N$ when $\bPhi$ has a row containing all $1$'s with all other rows being $0$. Due to the normalized columns in $\bPhi$, we also have that $\sigma^2_{\min}(\bPhi) \leq 1$. Therefore, it can be seen from the above theorem that when there is no damping (or only light damping, i.e., $r_k$ is close to 1) and the frequencies are well separated, and $\mu_2$ is close to $1$, we can bound $\mu_1$ by a constant and thus the number of measurements needed for perfect recovery is comparable to best case bounds for rank-$K$ matrix completion. Specifically, state-of-the-art bounds~\cite{chen2015completing} for low-rank matrix completion from uniform random samples involve a dependence on a certain coherence parameter (equal to the maximum leverage score of the matrix); when this coherence parameter is small, the sample complexity is $|\Omega| = \OO(\max(M,N) K)$ (up to logarithmic factors). The significance to Corollary~\ref{THMdualploymiss} is that the sample complexity is not stated in terms of the matrix coherence; rather, the dependence on the damping ratios and minimum frequency separation is explicitly revealed.
\textcolor{black}{We also note that this is quite distinct from the work~\cite{chen2014robust}, in which the theoretical guarantees are built on a different incoherence property rather than the explicit parameters such as frequencies.}

Inspired by Algorithm~\ref{alg_NN-MUSIC} and the above analysis, we propose the following Missing Data MUSIC algorithm, named MD-MUSIC, to identify the damping factors $\{r_k\}$ and frequencies~$\{f_k\}$ from the partially observed data matrix $\Xs_{\Omega}$. Note that any off-the-shelf SDP solver could be used to solve the SDP in~(\ref{SDPmiss})\footnote{\textcolor{black}{
Primal-dual algorithms that are used in solvers such as CVX can return both the data matrix $\Xh$ and the dual solution $\Q$ of~\eqref{SDPmiss} and~\eqref{dual_prob}, respectively.}}.

\begin{algorithm}[H]
\caption{MD-MUSIC}\label{alg_MD-MUSIC}
\begin{algorithmic}[1]
\Procedure{Input}{$\Xs_{\Omega} \in \CCC^{M \times N}$}
\State compute $\Xh$ and $\Q$ by solving the SDP~(\ref{SDPmiss})
\State form the dual polynomial: $\QQ(r,f) = \Q^H \a(r,f)$
\State localize the places where $\|\!\QQ(r,f)\!\|_2 \!=\! 1$ to get $(\widehat{r}_k,\widehat{f}_k)$
\State \textbf{return} $\Xh$ and $(\widehat{r}_k,\widehat{f}_k)$
\EndProcedure
\end{algorithmic}
\end{algorithm}

Finally, we note that one could also consider an alternative approach wherein one first solves the NNM problem in~(\ref{SDPmiss}) and then uses Algorithm~\ref{alg_NN-MUSIC} to identify the $r_k$'s and $f_k$'s using $\Xh$. Interestingly, however, as we demonstrate in Theorem~\ref{THMdataxrfd}, it is sometimes possible with MD-MUSIC to perfectly recover the $r_k$'s and $f_k$'s even when exact recovery of $\Xs$ fails. This implies that MD-MUSIC is actually more powerful than the alternative approach mentioned above. We also conduct  simulations to further present this interesting phenomenon (parameter recovery without exact data matrix recovery) in Section~\ref{nume}.

\section{Proofs}
\label{proof}

\subsection{Proof for Theorem~\ref{THMdualploy}}
\label{proofTHMdualploy}
Denote a truncated SVD of $\A_{rf}$ as $\A_{rf} = \U_{A_{rf}} \S_{A_{rf}} \V_{A_{rf}}^H$. We first consider the case when $\W = \zero$, i.e., $\Q = \U_{X^\star} \V_{X^\star}^H$. Note that
\begin{align*}
&\|\Q^H\a(r,f)\|_2^2 \!=\! \|\U_{X^\star}^H\a(r,f)\|_2^2
 \!=\! \a(r,f)^H \U_{X^\star} \U_{X^\star}^H \a(r,f)
=\a(r,f)^H\U_{X^\star} \S_{X^\star} \V_{X^\star}^H \V_{X^\star} \S_{X^\star}^{-1}   \U_{X^\star}^H \a(r,f) \\
=& \a(r,f)^H\Xs (\Xs)^\dagger \a(r,f)
 = \a(r,f)^H \A_{rf} \D_{\ct} \bPhi^\top {(\bPhi^\top)}^\dagger \D_{\ct}^{-1} \A_{rf}^{\dagger} \a(r,f) 
 = \a(r,f)^H \A_{rf} \A_{rf}^{\dagger} \a(r,f) \\
=&  \a(r,f)^H \U_{A_{rf}} \U_{A_{rf}}^H \a(r,f)
 = \left\langle \PP_{\U_{A_{rf}}} (\a(r,f)), \a(r,f) \right\rangle,
\end{align*}
where $ \PP_{\U_{A_{rf}}} (\a(r,f)) \triangleq \U_{A_{rf}} \U_{A_{rf}}^H \a(r,f)$ is defined as the orthogonal projection of $\a(r,f)$ onto the range space of $\U_{A_{rf}}$, i.e., $\RR(\U_{A_{rf}})$.
Note that the third equality is obtained by plugging in $\I = \S_{X^\star} \V_{X^\star}^H \V_{X^\star} \S_{X^\star}^{-1}$ while the fifth equality is obtained by plugging in $\Xs= \A_{rf} \D_{\ct} \bPhi^\top$ and $(\Xs)^\dagger  =(\bPhi^\top)^\dagger \D_{\ct}^{-1} \A_{rf}^\dagger$. Also, note that $\bPhi^\top {(\bPhi^\top)}^\dagger = \I$
when $\bPhi\in\CCC^{N\times K}$ is of rank $K$,
which gives the sixth equality. The seventh equality holds due to
$
\A_{rf} \A_{rf}^{\dagger} = \U_{A_{rf}} \S_{A_{rf}} \V_{A_{rf}}^H \V_{A_{rf}} \S_{A_{rf}}^{-1} \U_{A_{rf}}^H =  \U_{A_{rf}}  \U_{A_{rf}}^H.
$

\begin{itemize}
\item For all $(r_k,f_k)\in\mathcal{RF}$, we have $\a(r_k,f_k) \in \RR(\U_{A_{rf}} )$, which implies
$
\PP_{\U_{A_{rf}}} (\a(r_k,f_k)) = \a(r_k,f_k).
$
Therefore, we have
$\|\Q^H\a(r_k,f_k)\|_2^2 = \left\langle \a(r_k,f_k), \a(r_k,f_k) \right\rangle = 1.$

\item For all $(r,f)\notin\mathcal{RF}$, if we have $\a(r,f) \notin \RR(\U_{A_{rf}} )$, which implies
$\PP_{\U_{A_{rf}}} (\a(r,f)) = \a(r,f) - \PP_{\U_{A_{rf}}^\perp } (\a(r,f)),$
we would then have
\begin{align*}
&\|\Q^H\a(r,f)\|_2^2 
= \left\langle \a(r,f), \a(r,f) \right\rangle -  \left\langle \PP_{\U_{A_{rf}}^\perp } (\a(r,f)), \a(r,f) \right\rangle 
<  \left\langle \a(r,f), \a(r,f) \right\rangle =1.
\end{align*}
\end{itemize}

Thus, we only need to show $\a(r,f) \notin \RR(\U_{A_{rf}} )$ for all $(r,f)\notin\mathcal{RF}$. Define a Vandermonde matrix $\A_{rf}^v \in \CCC^{M \times K}$ as
\begin{align}
\A_{rf}^v \triangleq [\a^v(r_1,f_1),\cdots,\a^v(r_K,f_K)] \label{Avrf}
\end{align}
with $\a^v(r,f) \triangleq [1~r e^{j2\pi f 1} ~\cdots~r^{M-1}e^{j 2\pi f (M-1)}]^\top$, which is the unnormalized version of $\a(r,f)$.
Then, $\A_{rf}$ is the column-normalized version of $\A_{rf}^v$. Assuming $M \geq K$, it follows that the first $K$ rows of $\A_{rf}^v$ form a square Vandermonde matrix, denoted as $\A_{rf}^K$, whose determinant is given by~\cite{klinger1967classroom, kalman1984generalized}
\begin{align*}
\change{det}(\A_{rf}^K) = \prod_{1\leq i<k\leq K} (r_k e^{j2\pi f_k} - r_i e^{j2\pi f_i} ).
\end{align*}
Then, $\change{rank}( \A_{rf}) =\change{rank}( \A_{rf}^v) = K$ if $M\geq K$ and $(r_i,f_i) \neq (r_k,f_k)$ for all $i \neq k$. Similarly, we have
\begin{align*}
\change{rank}( [\A_{rf}| \a(r,f)]) = K+1,
\end{align*}
i.e., $\a(r,f)\notin\RR(\A_{rf}) = \RR(\U_{\A_{rf}})$ if $(r,f)\notin\mathcal{RF}$, $M\geq K+1$ and all the $(r_k,f_k)$ pairs in $\mathcal{RF}$ are distinct. 

\textcolor{black}{It remains to show the case when $\W \neq \zero$. In particular, we have
$
\|\QQ(r,f)\|_2^2 = \|\U_{X^\star}^H\a(r,f)\|_2^2 + \|\W^H\a(r,f)\|_2^2.
$
\begin{itemize}
\item For all $(r_k,f_k)\in\mathcal{RF}$, we have $\a(r_k,f_k) \in \RR(\U_{A_{rf}}) = \RR(\U_{X^\star})$, and so together with $\U_{X^\star}^H \W = \zero$, we have
$
\|\QQ(r_k,f_k)\|_2^2  = \|\U_{X^\star}^H\a(r_k,f_k)\|_2^2 = 1.
$
\item For all $(r,f)\notin\mathcal{RF}$, we have  $\a(r,f) \notin \RR(\U_{A_{rf}})= \RR(\U_{X^\star})$ as shown above. Denote $\W = \U_{X^\star}^\perp \S_W {\V_{X^\star}^\perp}^H$ as a truncated SVD of $\W$ with $\U_{X^\star}^H \U_{X^\star}^\perp = \zero$, $\V_{X^\star}^H \V_{X^\star}^\perp = \zero$ and $\|\S_W\| = \|\W\| \leq 1$. In the case where $\|\S_W\| = \|\W\| < 1$, we have
\begin{align*}
\|\QQ(r,f)\|_2^2 &= \|\U_{X^\star}^H\a(r,f)\|_2^2 + \|\S_W {\U_{X^\star}^\perp}^H \a(r,f)\|_2^2	
\leq \|\U_{X^\star}^H\a(r,f)\|_2^2 + \|\S_W\|^2 \|{\U_{X^\star}^\perp}^H \a(r,f)\|_2^2\\
& < \|\U_{X^\star}^H\a(r,f)\|_2^2 + \|{\U_{X^\star}^\perp}^H \a(r,f)\|_2^2
 \leq 1.
\end{align*}
\end{itemize}
}
This completes the proof of Theorem~\ref{THMdualploy}.

\subsection{Proof for Theorem~\ref{THMdataxrfd}}
\label{proofTHMdataxrfd}

\textcolor{black}{Unlike in Corollary~\ref{THMdualploymiss}, we now focus on the case when the data matrix is not perfectly recovered, i.e., $\Xh \neq \Xs$. Again, with some fundamental Lagrange analysis, we can conclude that the dual solution $\Q = \Q_\Omega$ with minimal energy belongs to the subdifferential of $\|\Xh\|_*$. Then, we have $\Q = \Q_\Omega = \U_{\Xh}\V_{\Xh}^H + \W$ with $\U_{\Xh}^H \W = \zero,~\W\V_{\Xh} = \zero,~\text{and}~\|\W\| \leq 1$ for some matrix $\W$. Note that $\Q = \Q_\Omega$ is a sparse matrix with zero entries on the complement index of $\Omega$, so $\W$ should be a non-zero matrix. Recall that $\Xs = \A_{rf} \D_{\ct}\bPhi^\top$ with $\A_{rf} = [\a(r_1,f_1),\cdots,\a(r_K,f_K)]$. Then, we have 
\begin{align*}
\a(r_k,f_k) \in \RR(\A_{rf}) = \RR(\Xs) \subseteq \RR(\Xh) = \RR(\U_{\Xh}), \quad k = 1,\ldots, K.
\end{align*}   
\begin{itemize}
\item For all $(r_k,f_k)\in\mathcal{RF}$, we have $\a(r_k,f_k) \in \RR(\U_{\Xh})$, and so together with $\U_{\Xh}^H \W = \zero$, we have
$
\|\QQ(r_k,f_k)\|_2^2  = \|\U_{\Xh}^H\a(r_k,f_k)\|_2^2 = 1.
$
\item For all $(r,f)\notin\mathcal{RF}$, we have  $\a(r,f) \notin \RR(\A_{rf})= \RR(\Xs)$ as shown in Section~\ref{proofTHMdualploy}.
(1) If $\a(r,f) \in \RR(\U_{\Xh})$, we still have $
\|\QQ(r,f)\|_2^2  = 1
$ as above. This could result in false estimation of $(r,f)$ pairs when we localize the places where $\|\QQ(r,f)\|_2$ achieves 1.  
(2) If $\a(r,f) \notin \RR(\U_{\Xh})$, denote $\W = \U_{\Xh}^\perp \S_W {\V_{\Xh}^\perp}^H$ as a truncated SVD of $\W$ with $\U_{\Xh}^H \U_{\Xh}^\perp = \zero$, $\V_{\Xh}^H \V_{\Xh}^\perp = \zero$ and $\|\S_W\| = \|\W\| \leq 1$. Recall that $\W \neq \zero$. In the case where $\|\S_W\| = \|\W\| < 1$, we have
\begin{align*}
\|\QQ(r,f)\|_2^2 &= \|\U_{\Xh}^H\a(r,f)\|_2^2 + \|\S_W {\U_{\Xh}^\perp}^H \a(r,f)\|_2^2	
\leq \|\U_{\Xh}^H\a(r,f)\|_2^2 + \|\S_W\|^2 \|{\U_{\Xh}^\perp}^H \a(r,f)\|_2^2\\
& < \|\U_{\Xh}^H\a(r,f)\|_2^2 + \|{\U_{\Xh}^\perp}^H \a(r,f)\|_2^2
 \leq 1.
\end{align*}
\end{itemize}
Thus, we finish the proof of Theorem~\ref{THMdataxrfd}.
 }

\subsection{Proof for Corollary~\ref{THMdualploymiss}}
\label{proofTHMdualploymiss}

Define
$
\D_{cM} \triangleq \sqrt{M} \change{diag}([c_1,~c_2,~\cdots,~c_K]),~
\At_{rf} \triangleq \frac{1}{\sqrt{M}} \A_{rf}^v,
$
where $\A_{rf}^v$ is the unnormalized Vandermonde matrix and defined in~\eqref{Avrf}.
Observe that the transpose of the noiseless data matrix $\Xs_\top \triangleq  {\Xs}^\top = \bPhi \D_{\ct} \A_{rf}^\top = \bPhi \D_{cM} \At_{rf}^\top$ can be viewed as the block Hankel matrix $\X_e$ introduced in~\cite{chen2014robust}, but with $k_1 = N$ and $k_2 = 1$.
Define $\PP_T$ as the projection operator that acts on the tangent space of $\Xs_\top$. Denote a truncated SVD of $\Xs_\top$ as $\Xs_\top = \U_{\Xs_\top}  \S_{\Xs_\top} \V_{\Xs_\top}^H$.
Then, \cite[Lemma 1]{chen2014robust} can be adapted to provide us sufficient conditions that are used to guarantee the unique optimality of $\Xs$. In particular, we can set $\AA$ and $\AA_\Omega$ as in \cite[Lemma 1]{chen2014robust} as the identity operator and the random sampling operator $\PP_\Omega$, respectively. Therefore, we need the following condition
\begin{align}
\left\|\PP_T - \frac{NM}{|\Omega|} \PP_T \PP_\Omega \PP_T\right\| \leq \frac 1 2.
\label{cond_PT}
\end{align}

%
%

Next, we verify that the above condition~\eqref{cond_PT} holds with high probability under certain conditions. Define $\A_{(n,m)} \in \RRR^{N\times M}$ as a matrix with the $(n,m)$-th entry being 1 and others being 0. We first quantify the projection of $\A_{(n,m)}$ onto the subspace $T$, the tangent space of $\Xs_\top$. In particular, we have the following lemma which utilizes a quite different incorence property than the one used in~\cite[Lemma~2]{chen2014robust}.
\begin{Lemma}\label{lem_incoherence}
For some constant $\mu_1$, if
\begin{align}
\sigma_{\min}(\bPhi^H \bPhi) \geq \frac{\mu_2}{\mu_1} \quad\text{and}\quad \sigma_{\min}(\At_{rf}^H \At_{rf}) \geq \frac{1}{\mu_1}, \label{incoherence}
\end{align}
then
\begin{align*}
\|\U_{\Xs_\top} \U_{\Xs_\top}^H \A_{(n,m)}\|_F^2 \leq \frac{\mu_1 c_s K}{NM}, \quad
\| \A_{(n,m)} \V_{\Xs_\top} \V_{\Xs_\top}^H \|_F^2 \leq \frac{\mu_1 c_s K}{NM}
\end{align*}
hold for any $(n,m)\in [N]\times [M]$ with $[N]\triangleq \{1,2,\ldots,N\}$ and $[M]\triangleq \{1,2,\ldots,M\}$.
We have defined $c_s \triangleq \max\{N,M\}$ and $\mu_2 \triangleq \max_{1\leq n \leq N} \left(  \sum_{k=1}^K |\phi_{nk}|^2 \right)\frac N K$. It follows that
\begin{equation}
\begin{aligned}
\|\PP_T(\A_{(n,m)})\|_F^2 
\leq  \|\U_{\Xs_\top} \U_{\Xs_\top}^H \A_{(n,m)}\|_F^2  +  \| \A_{(n,m)} \V_{\Xs_\top} \V_{\Xs_\top}^H \|_F^2  
\leq  \frac{2\mu_1 c_s K}{NM}. \label{eq_lemma1}
\end{aligned}
\end{equation}
\end{Lemma}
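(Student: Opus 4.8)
The plan is to exploit the rank-one structure $\A_{(n,m)} = \e_n \e_m^\top$, which collapses each Frobenius norm into a single leverage score. Since $\Xs_\top = \bPhi \D_{cM} \At_{rf}^\top$ with $\D_{cM}$ invertible and both $\bPhi$ and $\At_{rf}$ of full column rank $K$ (the hypotheses in~\eqref{incoherence} force $\sigma_{\min}(\bPhi^H\bPhi)>0$ and $\sigma_{\min}(\At_{rf}^H\At_{rf})>0$), the column space of $\Xs_\top$ coincides exactly with $\RR(\bPhi)$ and its row space with $\RR(\overline{\At_{rf}})$. Hence $\U_{\Xs_\top}\U_{\Xs_\top}^H$ is the orthogonal projector $\bPhi(\bPhi^H\bPhi)^{-1}\bPhi^H$ onto $\RR(\bPhi)$, and $\V_{\Xs_\top}\V_{\Xs_\top}^H$ is the projector $\overline{\At_{rf}}\big((\overline{\At_{rf}})^H\overline{\At_{rf}}\big)^{-1}(\overline{\At_{rf}})^H$ onto $\RR(\overline{\At_{rf}})$.

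First I would treat the column-space term. Writing $\A_{(n,m)} = \e_n\e_m^\top$ gives $\|\U_{\Xs_\top}\U_{\Xs_\top}^H \A_{(n,m)}\|_F^2 = \|\U_{\Xs_\top}\U_{\Xs_\top}^H \e_n\|_2^2$, the $n$-th leverage score of $\bPhi$. Using the projector formula and an operator-norm bound, $\e_n^\top \bPhi(\bPhi^H\bPhi)^{-1}\bPhi^H\e_n \leq \|(\bPhi^H\bPhi)^{-1}\|\,\|\text{row}_n(\bPhi)\|_2^2$, where $\|\text{row}_n(\bPhi)\|_2^2 = \sum_k |\phi_{nk}|^2 \leq \mu_2 K/N$ by the definition of $\mu_2$; combined with $\sigma_{\min}(\bPhi^H\bPhi)\geq \mu_2/\mu_1$ this yields the bound $\mu_1 K/N$. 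The row-space term reduces analogously to $\|\V_{\Xs_\top}^H\e_m\|_2^2$, the $m$-th leverage score of $\overline{\At_{rf}}$, for which the key estimate is $\|\text{row}_m(\At_{rf})\|_2^2 = \tfrac1M\sum_k r_k^{2(m-1)} \leq K/M$ because every $r_k\in[0,1]$; together with $\sigma_{\min}(\At_{rf}^H\At_{rf})\geq 1/\mu_1$ this gives $\mu_1 K/M$.

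To match the stated bound $\mu_1 c_s K/(NM)$ with $c_s=\max(N,M)$, I would note $\mu_1 K/N = \mu_1 M K/(NM) \leq \mu_1 c_s K/(NM)$ and $\mu_1 K/M = \mu_1 N K/(NM) \leq \mu_1 c_s K/(NM)$, so each of the two quantities is dominated by the claimed bound. Finally, using the tangent-space decomposition $\PP_T(\M) = \U_{\Xs_\top}\U_{\Xs_\top}^H \M + (\I - \U_{\Xs_\top}\U_{\Xs_\top}^H)\M\,\V_{\Xs_\top}\V_{\Xs_\top}^H$, whose two summands have column spaces in $\RR(\U_{\Xs_\top})$ and its orthogonal complement, the Pythagorean identity and non-expansiveness of projection give $\|\PP_T(\A_{(n,m)})\|_F^2 \leq \|\U_{\Xs_\top}\U_{\Xs_\top}^H\A_{(n,m)}\|_F^2 + \|\A_{(n,m)}\V_{\Xs_\top}\V_{\Xs_\top}^H\|_F^2$, which produces~\eqref{eq_lemma1}.

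The step I expect to demand the most care is the identification of the two subspaces together with the handling of complex conjugation in the row-space projector: one must verify that $\RR(\Xs_\top)=\RR(\bPhi)$ and $\RR(\Xs_\top^H)=\RR(\overline{\At_{rf}})$ hold with equality (not mere containment), which is exactly where the full column rank of the Vandermonde factor $\At_{rf}$ and the mode-shape factor $\bPhi$ is used, and that the leverage score of $\overline{\At_{rf}}$ equals the real scalar $\text{row}_m(\At_{rf})(\At_{rf}^H\At_{rf})^{-1}\text{row}_m(\At_{rf})^H$, since $(\overline{\At_{rf}})^H\overline{\At_{rf}} = \overline{\At_{rf}^H\At_{rf}}$ shares the singular values of $\At_{rf}^H\At_{rf}$. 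The remaining algebra (the operator-norm bounds and the $r_k\le 1$ estimate) is routine.
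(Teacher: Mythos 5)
Your proposal is correct and follows essentially the same route as the paper: identify $\U_{\Xs_\top}\U_{\Xs_\top}^H$ and $\V_{\Xs_\top}\V_{\Xs_\top}^H$ with the projectors built from $\bPhi$ and $\At_{rf}$, bound each Frobenius norm by $\|(\cdot^H\cdot)^{-1}\|$ times the squared row norm, use $\sum_k|\phi_{nk}|^2\leq \mu_2 K/N$ and $r_k\in[0,1]$ to get $\mu_1 K/N$ and $\mu_1 K/M$, and dominate both by $\mu_1 c_s K/(NM)$. You are in fact slightly more careful than the paper on two points it glosses over---the complex conjugation in the row-space projector (the paper writes $\V_{\Xs_\top}\V_{\Xs_\top}^H = \At_{rf}(\At_{rf}^H\At_{rf})^{-1}\At_{rf}^H$ without the conjugate, though the bound is unaffected since conjugation preserves singular values and Frobenius norms) and the explicit tangent-space decomposition justifying \eqref{eq_lemma1}, which the paper asserts without proof.
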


\begin{proof}
Note that $\U_{\Xs_\top}~(\V_{\Xs_\top})$ and $\bPhi~(\At_{rf})$ determine the same column (row) space of $\Xs_\top$. In particular, we have
\begin{align*}
\U_{\Xs_\top} \U_{\Xs_\top}^H = \bPhi (\bPhi^H \bPhi)^{-1} \bPhi^H,  \quad
\V_{\Xs_\top} \V_{\Xs_\top}^H = \At_{rf} (\At_{rf}^H \At_{rf})^{-1} \At_{rf}^H,
\end{align*}
which implies
\begin{align*}
\|\U_{\Xs_\top} \U_{\Xs_\top}^H \A_{(n,m)}\|_F^2 
= &\| \bPhi (\bPhi^H \bPhi)^{-1} \bPhi^H  \A_{(n,m)}\|_F^2
=  \left\langle  (\bPhi^H \bPhi)^{-1} \bPhi^H \A_{(n,m)},  \bPhi^H \A_{(n,m)} \right\rangle \\
\leq &\| (\bPhi^H \bPhi)^{-1} \| \| \bPhi^H \A_{(n,m)} \|_F^2
= \frac{1}{\sigma_{\min}( \bPhi^H \bPhi )} \| \bPhi^H \A_{(n,m)} \|_F^2
\end{align*}
and
\begin{align*}
\| \A_{(n,m)} \V_{\Xs_\top} \V_{\Xs_\top}^H \|_F^2 
= &\| \A_{(n,m)} \At_{rf} (\At_{rf}^H \At_{rf})^{-1} \At_{rf}^H \|_F^2
=  \left\langle \A_{(n,m)} \At_{rf} (\At_{rf}^H \At_{rf})^{-1}, \A_{(n,m)} \At_{rf} \right\rangle \\
 \leq & \| (\At_{rf}^H \At_{rf})^{-1} \| \| \A_{(n,m)} \At_{rf} \|_F^2
= \frac{1}{\sigma_{\min}(\At_{rf}^H \At_{rf})} \| \A_{(n,m)} \At_{rf}  \|_F^2.
\end{align*}

Define
$
\mu_2 \triangleq \max_{1\leq n \leq N} \left(  \sum_{k=1}^K |\phi_{nk}|^2 \right)\frac N K.
$
Note that $1\leq \mu_2 \leq N$.
Recall that $\A_{(n,m)} \in \RRR^{N\times M}$ is a matrix with the $(n,m)$-th entry being $1$ and all others being $0$. Therefore, we can bound $\| \bPhi^H \A_{(n,m)} \|_F^2$ and $\| \A_{(n,m)} \At_{rf}  \|_F^2$ with
\begin{align*}
\| \bPhi^H \A_{(n,m)} \|_F^2 = \sum_{k=1}^K |\phi_{nk}|^2 \leq \mu_2 \frac K N,\quad
\| \A_{(n,m)} \At_{rf}  \|_F^2 = \sum_{k=1}^K \frac{1}{M} r_k^{2m} \leq \frac K M.
\end{align*}
Define $c_s \triangleq \max(N,M)$. Then, if
\begin{align*}
\sigma_{\min}( \bPhi^H \bPhi ) \geq \frac{\mu_2}{\mu_1}, \quad\sigma_{\min}(\At_{rf}^H \At_{rf}) \geq \frac{1}{\mu_1},
\end{align*}
we can get
\begin{align*}
\|\U_{\Xs_\top} \U_{\Xs_\top}^H \A_{(n,m)}\|_F^2 & \leq  \frac{\mu_1 K}{N} =  \frac{\mu_1 K M}{NM} \leq  \frac{\mu_1 K c_s}{NM},  \\
\| \A_{(n,m)} \V_{\Xs_\top} \V_{\Xs_\top}^H \|_F^2 & \leq \frac{\mu_1 K}{M} =  \frac{\mu_1 K N}{NM} \leq  \frac{\mu_1 K c_s}{NM}.
\end{align*}
Then, we obtain~\eqref{eq_lemma1}.
\end{proof}

Similar to Lemma 3 in~\cite{chen2014robust}, we would then have that condition~\eqref{cond_PT} holds with probability at least $1-(NM)^{-4}$ if
$
|\Omega| \geq c_1 \mu_1 c_s K \log(NM),
$
where $c_1\geq 0$ is a constant.

The remaining proof for Corollary~\ref{THMdualploymiss} follows the corresponding proof steps for Theorem 1 in~\cite{chen2014robust}. This yields Corollary~\ref{THMdualploymiss}, which is similar to Theorem 1 in~\cite{chen2014robust} but with different incoherence properties~\eqref{incoherence}.

To obtain these different incoherence properties, we bound the minimum nonzero singular value of $\bPhi$ and $\At_{rf}$.
It follows from Theorem 5 of~\cite{aubel2017vandermonde} that
\begin{align*}
\sigma_{\min}(\At_{rf}^H \At_{rf})& = \sigma_{\min}^2(\At_{rf})
= \frac 1 M  \sigma_{\min}^2(\A_{rf}^v) 
\geq \frac 1 M  \LL(M,r,f),
\end{align*}
where  $\LL(M,r,f)$ is defined as
\begin{align*}
\LL(M,r,f) \triangleq \min\limits_{1\leq k \leq K} \frac{1}{r_k} \left[ \gamma_M(r_k)-\frac{c_2}{ \Delta_f} (1+r_k^{2M}) \right]
\end{align*}
with $c_2$ being a constant and
\begin{align*}
\gamma_M(r_k) \triangleq \begin{cases}
\frac{r_k^{2M}-1}{2 \log(r_k)},~~&r_k < 1,\\
M, &r_k = 1,
\end{cases}
~~~~~k=1,\ldots,K.
\end{align*}
Note that
$
\LL(M,r,f) = M-\frac{2c_2}{ \Delta_f}
$
is on the order of $M$ when there is no damping (i.e., $r=1$) and the frequencies are well separated ($\Delta_f = O(\frac 1 M)$).
To satisfy the two assumptions in~\eqref{incoherence}, we can let
$
\frac 1 M \LL(M,r,f)\geq
\frac{1}{\mu_1},~
\text{and}~\mu_1\geq \frac{\mu_2}{\sigma_{\min}^2(\bPhi)},
$
that is,
$
\mu_1\geq
\max \left\{\frac{M} {\LL(M,r,f)}, \frac{\mu_2}{\sigma_{\min}^2(\bPhi)} \right\}.
$

With some fundamental Lagrange analysis as in Section~\ref{full}, we can also conclude that the optimal dual solution $\Q$ belongs to the subdifferential of $\|\Xh\|_*$, where $\Xh$ is the optimal solution of~\eqref{NNMmiss}. In the case of exact recovery, i.e., $\Xh = \Xs$, the dual solution $\Q$ has the form $\Q = \U_{\Xs}\V_{\Xs}^H + \W$, where $\U_{X^\star}^H \W = \zero,~\W\V_{X^\star} = \zero,~\text{and}~\|\W\| \leq 1$. Then, we are still able to identify the true $r_k$'s and $f_k$'s by localizing the places where  $\|\QQ(r,f)\|_2 = \|\Q^H \a(r,f)\|_2$ achieves 1 with $\Q$ being the optimal dual solution of~\eqref{dual_prob} \textcolor{black}{and $\|\W\|<1$} in the case when $\Xh = \Xs$. \textcolor{black}{Thus, we finish the proof of Corollary~\ref{THMdualploymiss}.}

\section{Numerical Simulations}
\label{nume}

\subsection{Full data case}
\label{fdc}

In this experiment, we use synthetic data to test the proposed Algorithm~\ref{alg_NN-MUSIC} with $K = 3$. The true $r_k$'s and $f_k$'s are set as $r_1 = 0.92,~r_2 = 0.98,~r_3 = 0.85$ and $f_1 = 0.1,~f_2 = 0.4,~f_3 = 0.8$. We set $M=N =50$. The data matrix $\Xs$ is then generated as
$
\Xs = \A_{rf} \D_{\ct} \bPhi^\top,
$
where $\A_{rf},~\text{and}~~ \D_{\ct}$ are generated according to their definition in Section~\ref{prob}, $\bPhi$ is generated as a Gaussian random matrix with normalized columns, and the $c_k$'s are set as $K$ Gaussian random numbers with zero mean and unit variance. The first three columns of $\Xs$ are shown in Figure~\ref{test_NA_data} (a).
Given the above data matrix $\Xs$, we then use Algorithm~\ref{alg_NN-MUSIC} to identify all the $r_k$'s and $f_k$'s. Figure~\ref{test_NA_data} (b) displays a surface plot of $\|\QQ(r,f)\|_2$ and indicates that Algorithm~\ref{alg_NN-MUSIC}  identifies all the $r_k$'s and $f_k$'s perfectly.

Next, as a demonstration, we repeat the above experiment but with additive white Gaussian noise with variance $\sigma = 0.1$ (SNR = $9.8806$dB). The noiseless data and noisy data are shown in Figure~\ref{test_NA_data_noise}~(a). We set the regularization parameter\footnote{Here, the regularization parameter is set according to~\cite{li2017atomicIEEE}.} as $\lambda = 0.2\sigma \sqrt{4MN\log(M)}  = 3.9558$ and then solve the nuclear norm denoising program~\eqref{NNMnoise}. As is shown in Figure~\ref{test_NA_data_noise}~(b), we observe that the $(r,f)$ pairs can still be estimated by localizing the peaks of $\|\QQ(r,f)\|_2$.
In particular, the estimated damping ratios and frequencies are given as $\rh_1 = 0.92,~\rh_2 = 0.98,~\rh_3 = 0.79$ and $\fh_1 = 0.1,~\fh_2 = 0.4,~\fh_3 = 0.8$.
Note that we leave the corresponding theoretical guarantees for future work.

\begin{figure}[t]

\begin{minipage}{0.48\linewidth}
\centering
\includegraphics[width=2.3in]{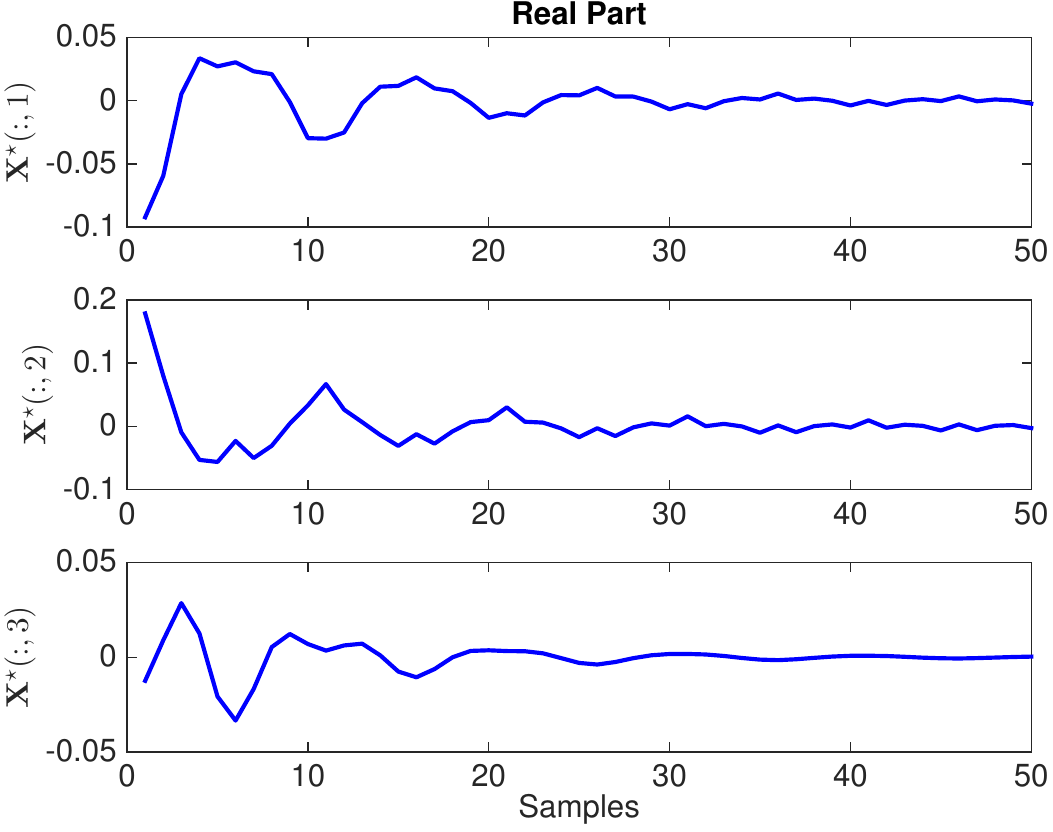}
\centerline{\small{(a)}}
\end{minipage}
\hfill
\begin{minipage}{0.48\linewidth}
\centering
\includegraphics[width=2.5in]{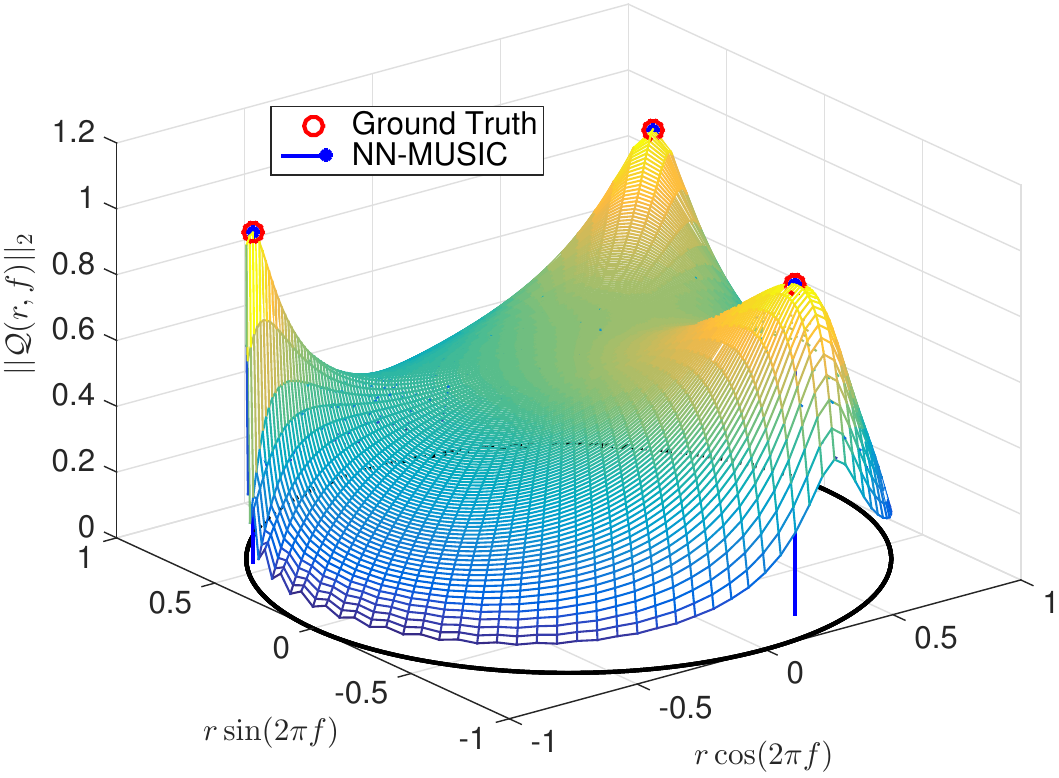}
\centerline{\small{(b) $40\%$ missing}}
\end{minipage}
\caption{Noiseless full data: (a) The first three columns in data matrix $\Xs$. (b) The blue lines correspond to the locations where $\|\QQ(r,f)\|_2$ achieves 1 while the red circles correspond to the true $r_k$'s and $f_k$'s. They coincide because the recovery is perfect.}
\label{test_NA_data}
\end{figure}


\begin{figure}[t]

\begin{minipage}{0.48\linewidth}
\centering
\includegraphics[width=2.3in]{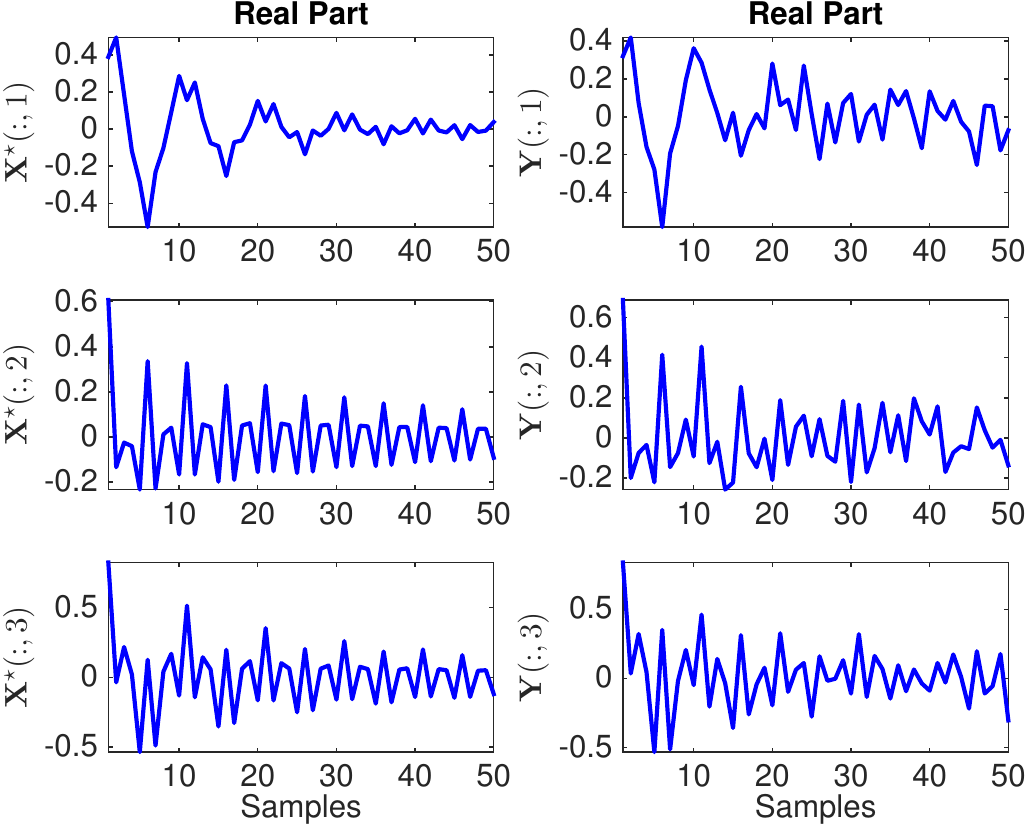}
\centerline{\small{(a)}}
\end{minipage}
\hfill
\begin{minipage}{0.48\linewidth}
\centering
\includegraphics[width=2.5in]{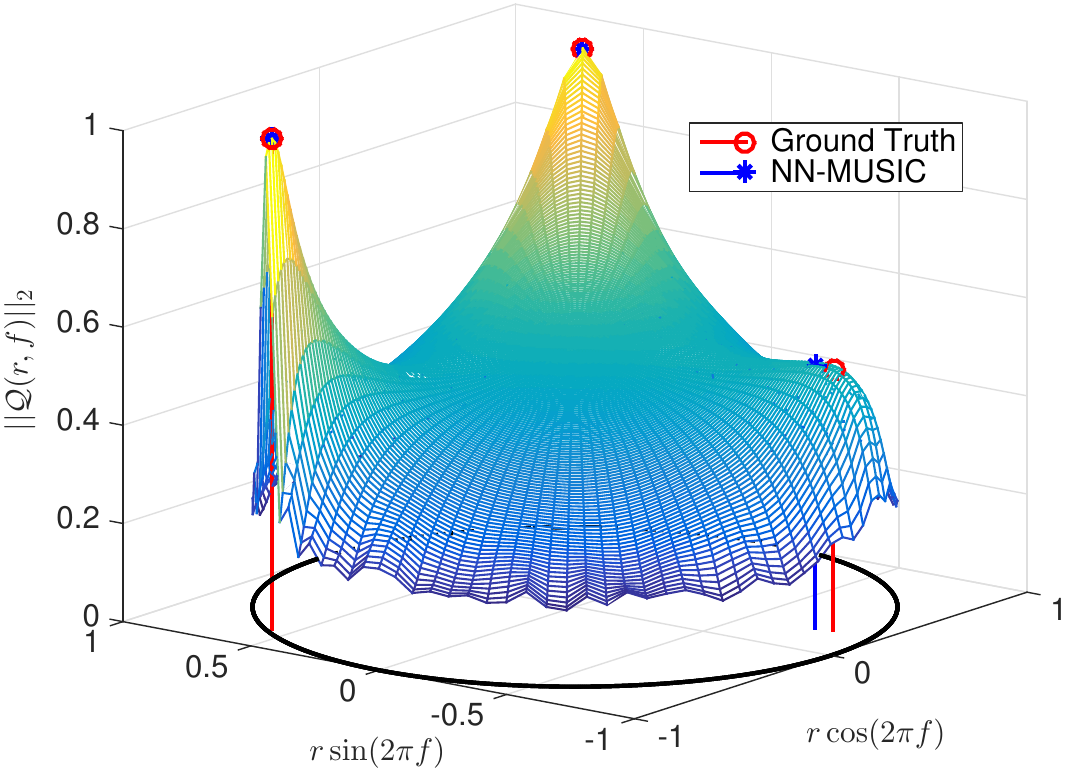}
\centerline{\small{(b) $40\%$ missing}}
\end{minipage}
\caption{Noisy full data: (a) Left: noiseless data $\Xs$. Right: noisy data $\Y$. (b) The blue lines correspond to the peaks of $\|\QQ(r,f)\|_2$ while the red lines mark the position of the true $r_k$'s and $f_k$'s.}
\label{test_NA_data_noise}
\end{figure}

%
%
%

\subsection{Missing data case}
\label{mdc}

We repeat the above experiments with missing data, namely we identify the damping factors and frequencies from the given partially observed data matrix $\Xs_{\Omega}$ by solving the NNM problem in~(\ref{NNMmiss}).\footnote{Note that CVX~\cite{grant2008cvx} can return the estimated data matrix $\Xh$ as well as the dual solution $\Q$ by solving the SDP form in~(\ref{SDPmiss}).} All parameters are set same as in the noise-free setting of Section~\ref{fdc}. After generating the full data matrix $\Xs$, we randomly remove $20\%$ and $40\%$ of its entries. We then use Algorithm~\ref{alg_MD-MUSIC} to identify all the $r_k$'s and $f_k$'s from the partial data.
Figure~\ref{test_MUSIC_Damping_Missing} indicates that Algorithm~\ref{alg_MD-MUSIC}  identifies all the $r_k$'s and $f_k$'s perfectly.

We notice that the data matrix $\Xs$ is also well recovered in this case. In particular, we define the relative recovery error of data matrix as $\change{RelErr}\triangleq\frac{\|\Xh-\Xs\|_F}{\|\Xs\|_F}$, where $\Xs$ and $\Xh$ denote the true full data matrix and the recovered data matrix via NNM. In particular, we have $\change{RelErr} = 4.7487 \times 10^{-10} $ when $20\%$ of the data is missing and $\change{RelErr} =3.5190\times 10^{-8} $ when $40\%$ of the data is missing.
Moreover, as is shown in Figure~\ref{test_MUSIC_Damping_Missing_dataerr}, we also observe that in some cases, the $r_k$'s and $f_k$'s can be perfectly recovered even if we do not perfectly recover $\Xs$, which further supports our Theorem~\ref{THMdataxrfd}.

Finally, we investigate the minimal number of measurements needed for perfect recovery with various numbers $K$ of spectral components. We set $M=70$ and $N=50$. For each value of $K$, we randomly pick $K$ frequencies and damping ratios from a frequency set $\FF = 0.05:0.05:0.95$ and a damping ratio set $\RR = 0.94:0.0025:1$.\footnote{We choose 0.94 as the lowest damping ratio since we want to keep at least $1\%$ energy at the end of uniform sampling. Therefore, we have $r=0.01^{\frac 1 M} \leq 0.94$.} Denote $(\rh,\fh)$ and $(\rs,\fs)$ as the recovered parameters and true parameters, respectively. We consider the parameter recovery to be a success if
\begin{align}
\max_{1\leq k \leq K}( |\rh_k-\rs_k|) \!\leq \!10^{-5},~\max_{1\leq k \leq K}( |\fh_k-\fs_k|) \! \leq \!10^{-5}.
\label{succ_para}
\end{align}
Similarly, we consider the data matrix recovery to be a success if the relative recovery error
$
\frac{\|\Xh-\Xs\|_F}{\|\Xs\|_F}  \leq 10^{-5}.
$
We perform 20 trials in this part of simulation. It can be seen in Figure~\ref{test_MUSIC_Damping_Missing_phasetransition} that the minimal number of measurements needed for perfect data matrix recovery does scale roughly linearly with $K$, as indicated in Corollary~\ref{THMdualploymiss}. We also notice a similar behavior appearing in parameters recovery. Figure~\ref{test_MUSIC_Damping_Missing_phasetransition} (c) again indicates that we can still successfully recover the parameters in some cases where the data matrix  is not perfectly recovered.

\begin{figure}[t]

\begin{minipage}{0.48\linewidth}
\centering
\includegraphics[width=2.7in]{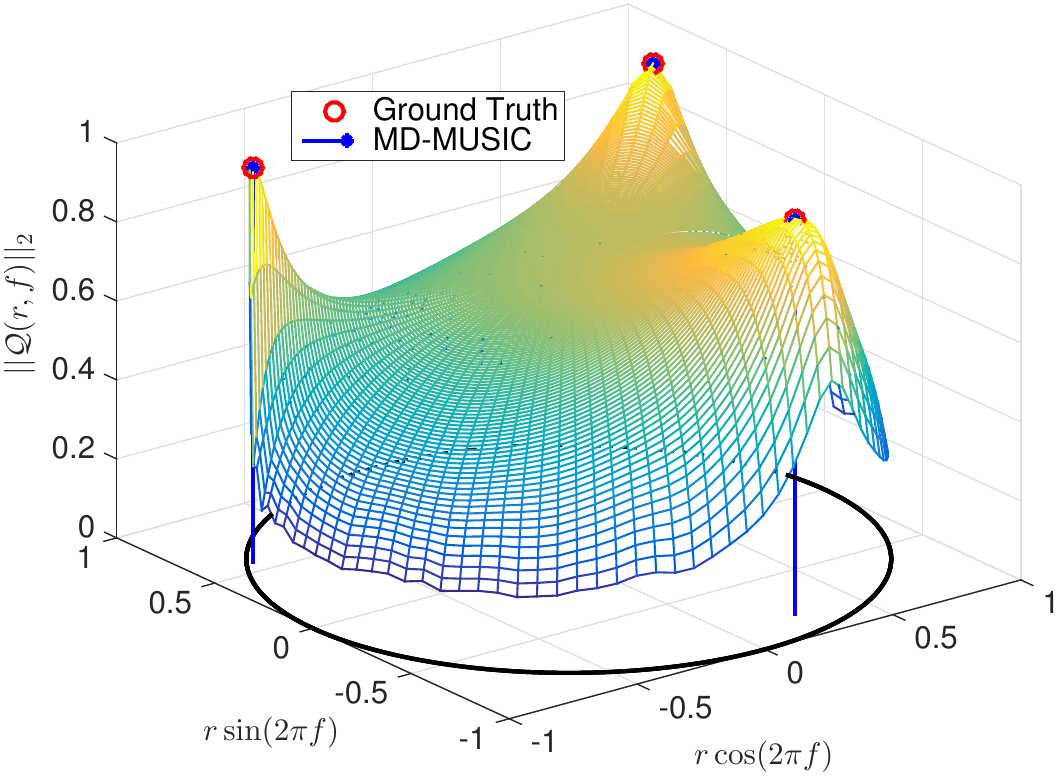}
\centerline{\small{(a) $20\%$ missing}}
\end{minipage}
\hfill
\begin{minipage}{0.48\linewidth}
\centering
\includegraphics[width=2.7in]{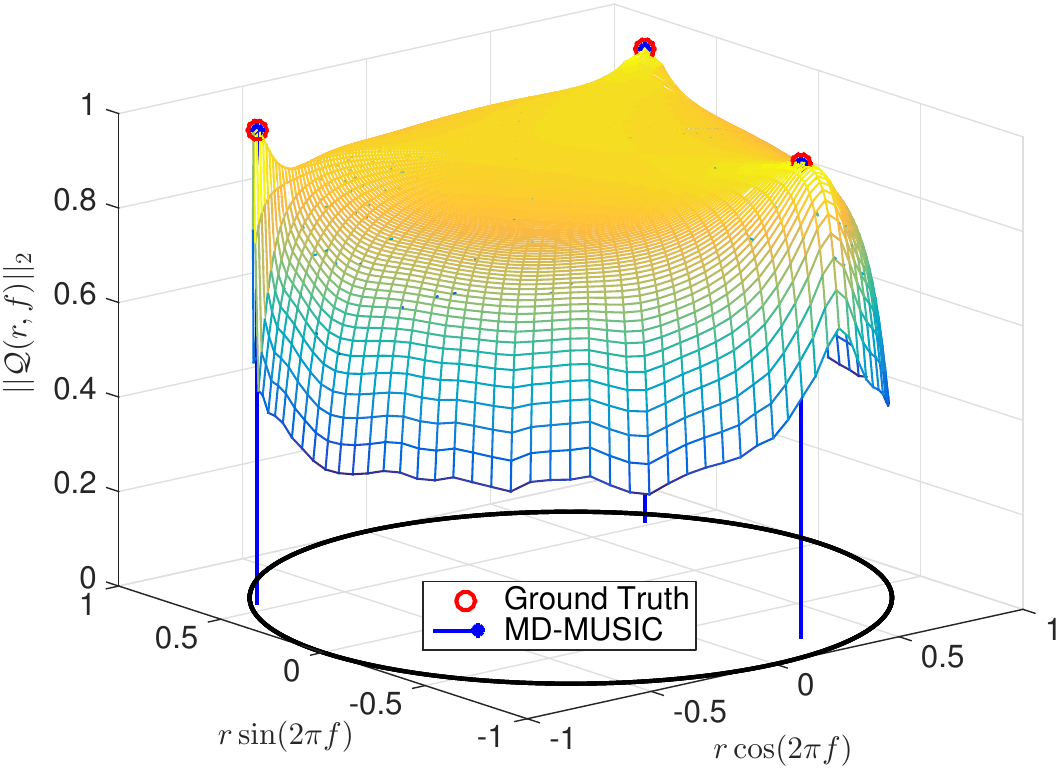}
\centerline{\small{(b) $40\%$ missing}}
\end{minipage}
\caption{Noiseless missing data: the blue lines correspond to the locations where $\|\QQ(r,f)\|_2$ achieves 1 while the red circles correspond to the true $r_k$'s and $f_k$'s. They coincide because the recovery is perfect.}
\label{test_MUSIC_Damping_Missing}
\end{figure}

\begin{figure}[t]

\begin{minipage}{0.48\linewidth}
\centering
\includegraphics[width=2.5in]{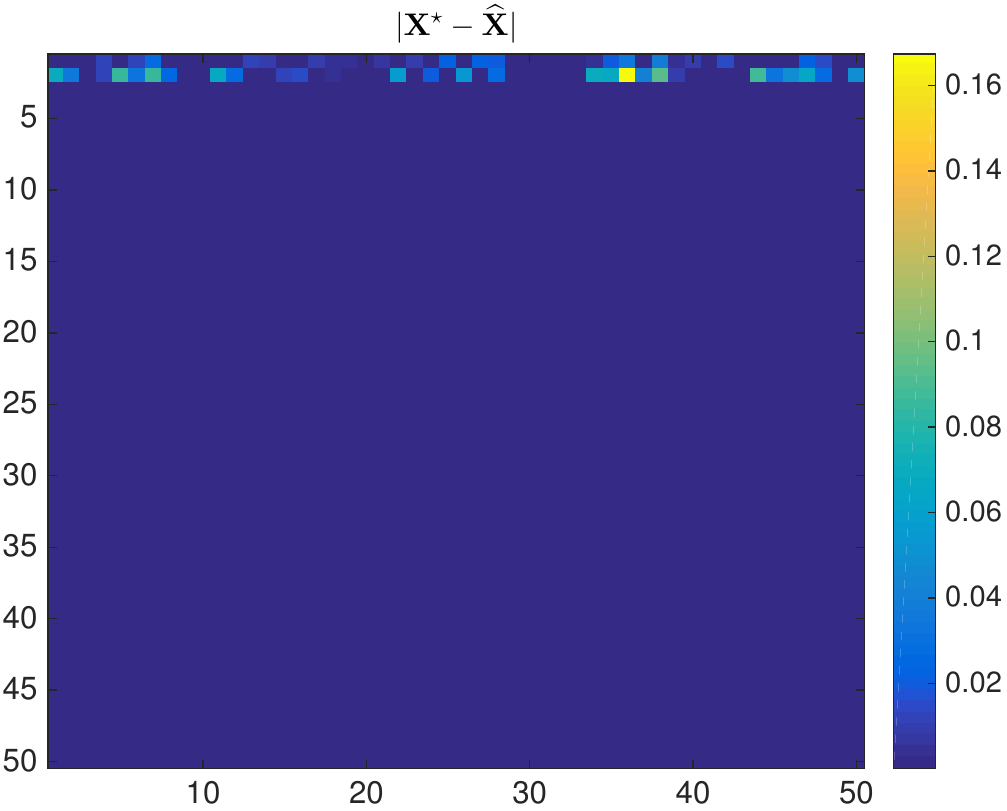}
\centerline{\small{(a) data reconstruction error}}
\end{minipage}
\hfill
\begin{minipage}{0.48\linewidth}
\centering
\includegraphics[width=2.7in]{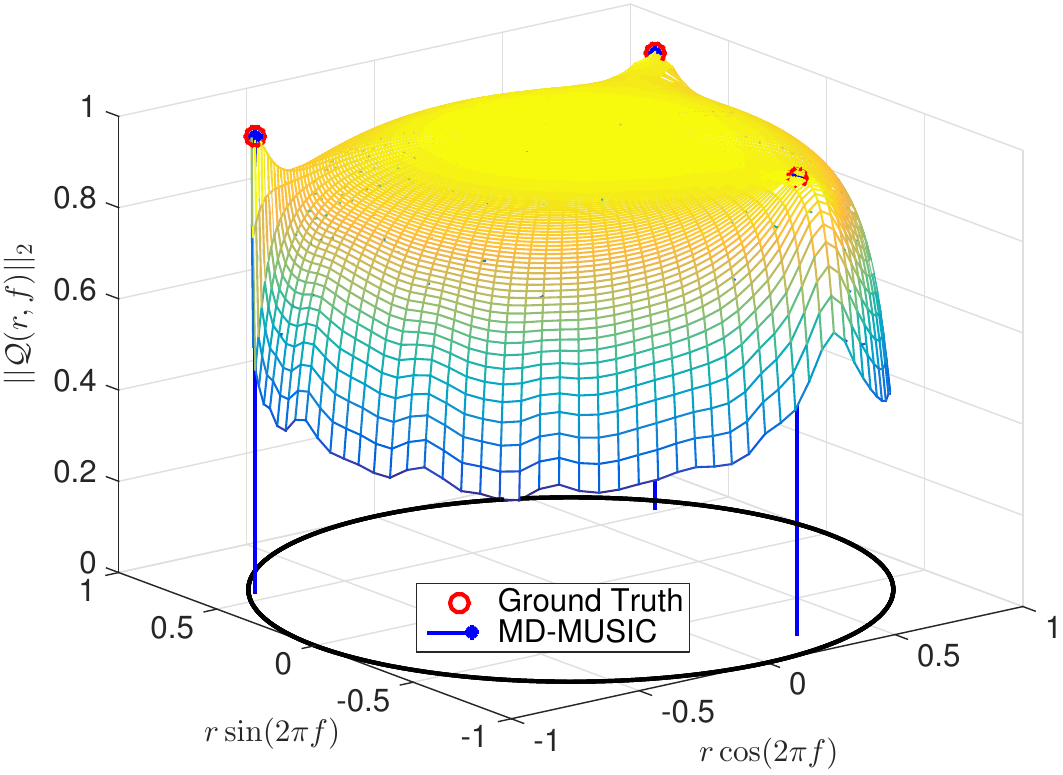}
\centerline{\small{(b) $40\%$ missing}}
\end{minipage}
\caption{Noiseless missing data: in some cases, the $r_k$'s and $f_k$'s can be perfectly recovered even when $\Xs$ is not. In this example, we have RelErr = 0.0412.}
\label{test_MUSIC_Damping_Missing_dataerr}
\end{figure}

\begin{figure}[t]
\begin{minipage}{0.31\linewidth}
\centering
\includegraphics[width=1.8in]{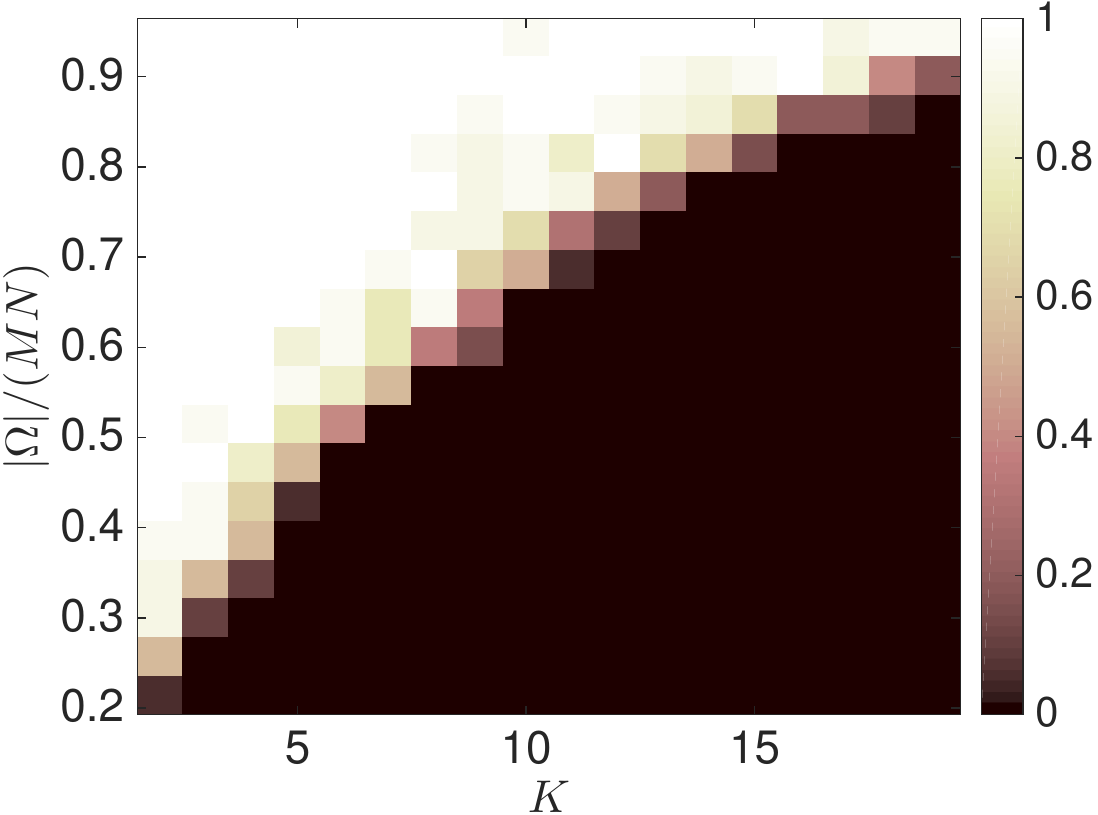}
\centerline{\footnotesize{(a) $(r,f)$}}
\end{minipage}
\hfill
\begin{minipage}{0.31\linewidth}
\centering
\includegraphics[width=1.8in]{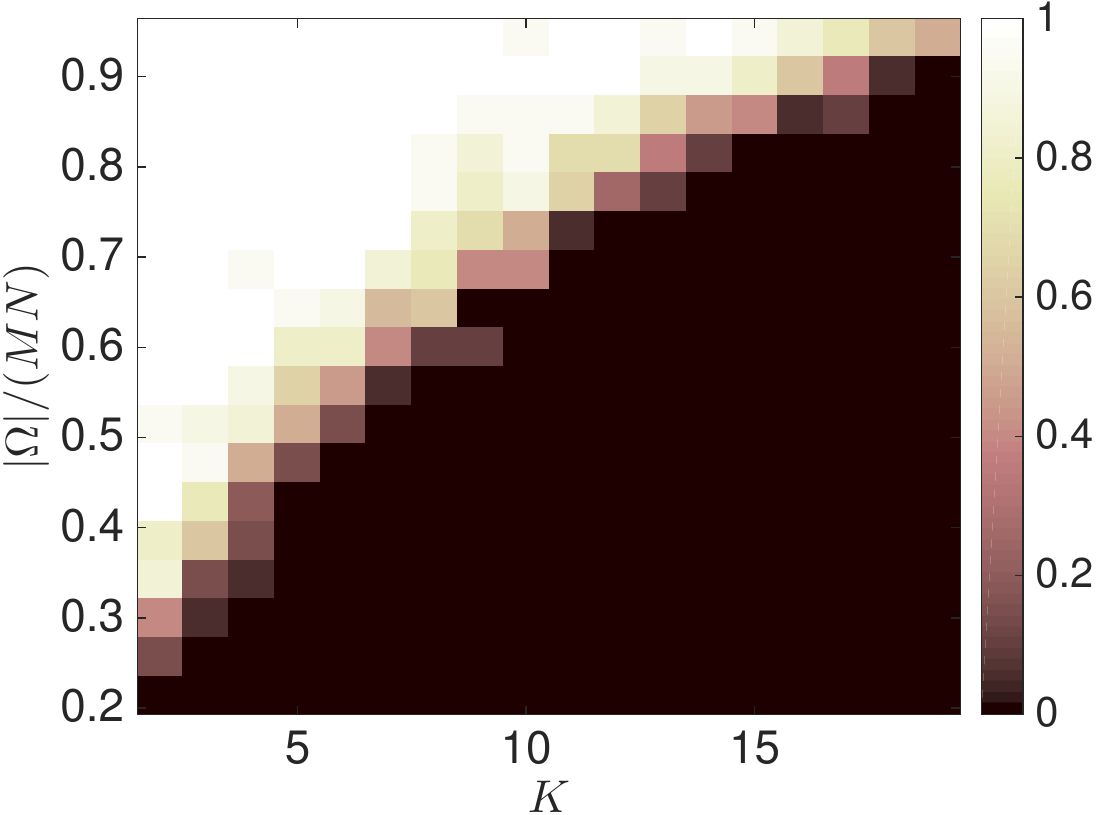}
\centerline{\footnotesize{(b) $\Xs$}}
\end{minipage}
\hfill
\begin{minipage}{0.31\linewidth}
\centering
\includegraphics[width=1.8in]{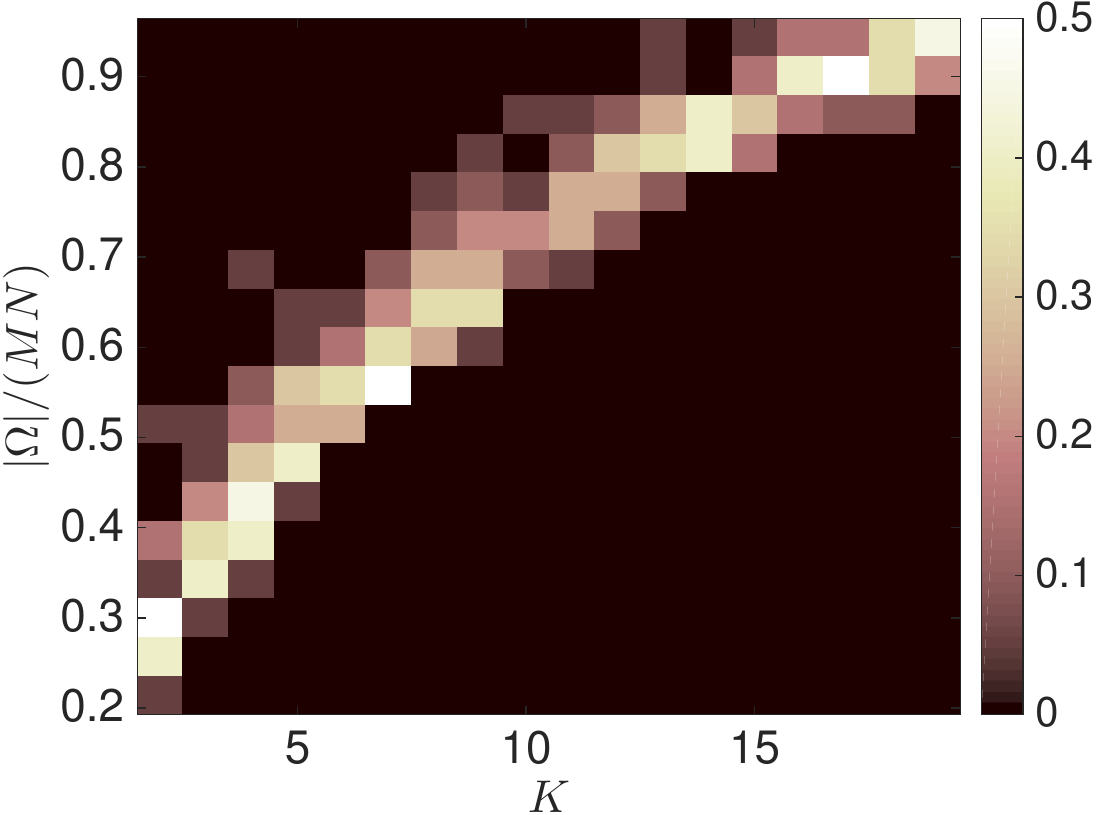}
\centerline{\footnotesize{(c) probability difference}}
\end{minipage}
\caption{Probability of successful recovery for (a) parameters $(r,f)$ and (b) data matrix $\Xs$. (c) presents the probability of successful recovery for parameters minus the probability of successful recovery for the data matrix.}
\label{test_MUSIC_Damping_Missing_phasetransition}
\end{figure}

\subsection{Data coherence}
\label{test_coherence}

In this section, we conduct three numerical experiments to examine the influence of the minimum frequency separation~$\Delta_f$, the matrix~$\bPhi$, and the damping ratio $r_k$ on the performance of missing data matrix recovery. The standard literature on matrix completion~\cite{chen2015completing} relates the recoverability of a matrix $\Xs$ to its {\em coherence}, defined as
$
\mu_0^\star \triangleq \max \left\{ \mu_1^\star(\Xs), \mu_2^\star(\Xs)  \right\}
$
with
\begin{align*}
\mu_1^\star(\Xs) \triangleq \frac{M}{K} \max_{1\leq m\leq M} \left\| \U_{\Xs}^\top \e_m^c \right\|_2^2,\quad
\mu_2^\star(\Xs) \triangleq \frac{N}{K} \max_{1\leq n\leq N} \left\| \V_{\Xs}^\top \e_n^c \right\|_2^2,
\end{align*}
where $\Xs = \U_{X^\star} \S_{X^\star} \V_{X^\star}^H$ is a truncated SVD of $\Xs$, and $\e_m^c \in \RRR^M$ and $\e_n^c\in\RRR^N$ denote canonical basis vectors.

In the first experiment, we examine the influence of minimum frequency separation on the performance of missing data recovery with $M=50$, $N=30$, $K=2$ and $|\Omega |= 450$, i.e., $70\%$ of the data are missing. To simplify the experiment, we set $r_1 = r_2 =1$ and $c_1 = c_2 = 1$.  We fix $f_1 = 0.1$ and let $f_2 = f_1+\Delta_f$ with various values of the minimum frequency separation $\Delta_f$. We generate $\bPhi\in\CCC^{N\times K}$ using normalized columns from a discrete Fourier matrix, which implies $\mu_2^\star = 1$ and ensures that $\mu_0^\star = \mu_1^\star$. $10^4$ trials are performed in this experiment.
Other settings are the same as in Section~\ref{mdc}. It is shown in Figure~\ref{test_MUSIC_Missing_co_pa_deltf} (a, b) that the coherence parameter $\mu_0^\star$ decreases as the minimum frequency separation $\Delta_f$ increases, which also explains why the probability of successful data matrix recovery increases as  $\Delta_f$ increases.

In the second experiment, we examine the influence of the matrix $\bPhi$ on the performance of missing data recovery. We change $N=10$ to make sure that the coherence parameter $\mu_0^\star$ is not too large. By fixing $\Delta_f = 1/M$, we have $\mu_1^\star = 1$ and thus $\mu_0^\star = \mu_2^\star$. We again generate $\bPhi\in\CCC^{N\times K}$ using columns from the discrete Fourier matrix, but we then replace its first entry $\phi_{1,1}$ with scalars in the range of $[1,10]$ and then normalize its columns. Other settings are same as the first experiment. We conduct 500 trials in this experiment. Figure~\ref{test_MUSIC_Missing_co_pa_deltf} (c, d) shows that the coherence parameter $\mu_0^\star$ increases as $\phi_{1,1}$ increases, which also explains why the probability of successful data matrix recovery decreases as $\phi_{1,1}$ increases.

In the third experiment, we examine the influence of the damping ratio on the performance of missing data recovery. We fix $\Delta_f = 1/M$ and repeat the first experiment with various values of $r_2$. We conduct 100 trials in this experiment. As shown in Figure~\ref{test_MUSIC_Missing_co_pa_deltf} (e, f), the coherence parameter $\mu_0^\star$ decreases as $r_2$ increases, which also explains why the probability of successful data matrix recovery increases as $r_2$ increases. This is to be expected since exponentials with smaller damping ratio are transient and their contribution tends to fade quickly from the measured data. Moreover, Figure~\ref{test_MUSIC_Missing_co_pa_deltf} (f) again indicates that we can successfully recover the parameters in some cases where the data matrix is not perfectly recovered.

\begin{figure}[t]
\begin{minipage}{0.31\linewidth}
\centering
\!\!\!\!\!\!\!\!\!\!\!\!\includegraphics[width=1.85in]{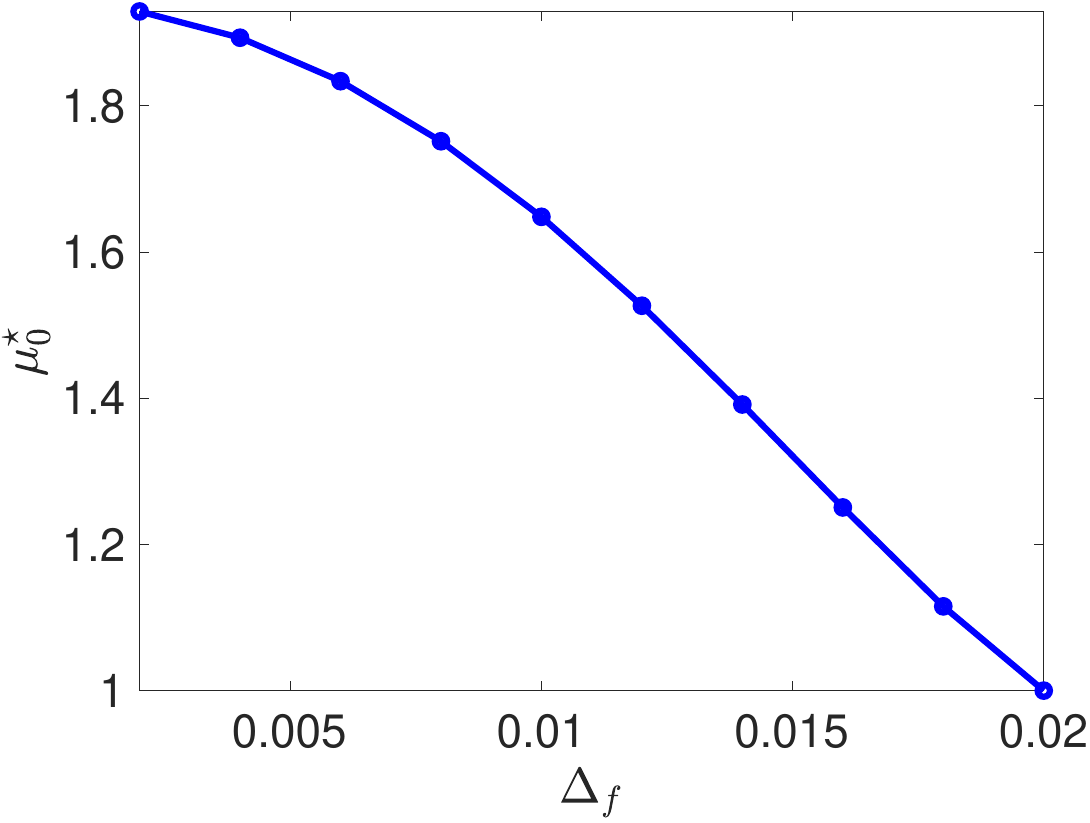}
\centerline{\!\!\!\!\!\!\!\!\!\!\!\!\footnotesize{(a)}}
\end{minipage}
~
\begin{minipage}{0.31\linewidth}
\centering
\!\!\!\!\!\!\!\!\!\includegraphics[width=1.85in]{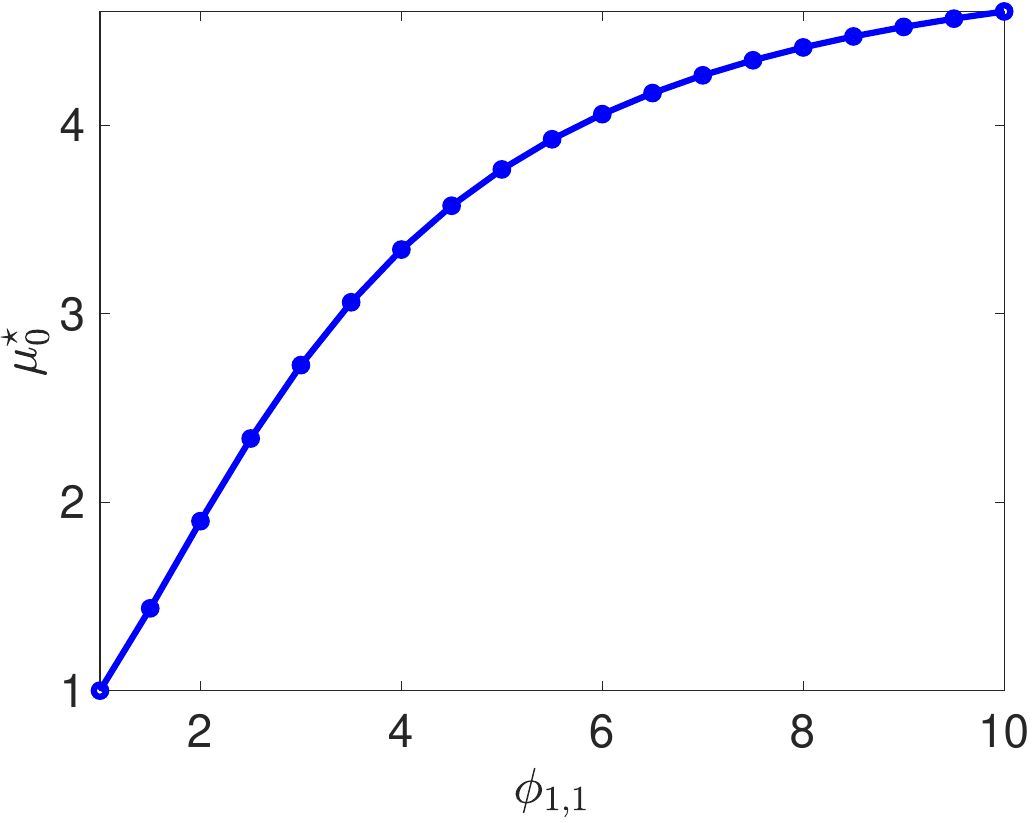}
\centerline{\!\!\!\!\!\!\!\!\!\footnotesize{(c)}}
\end{minipage}
~
\begin{minipage}{0.31\linewidth}
\centering
\!\!\!\includegraphics[width=1.85in]{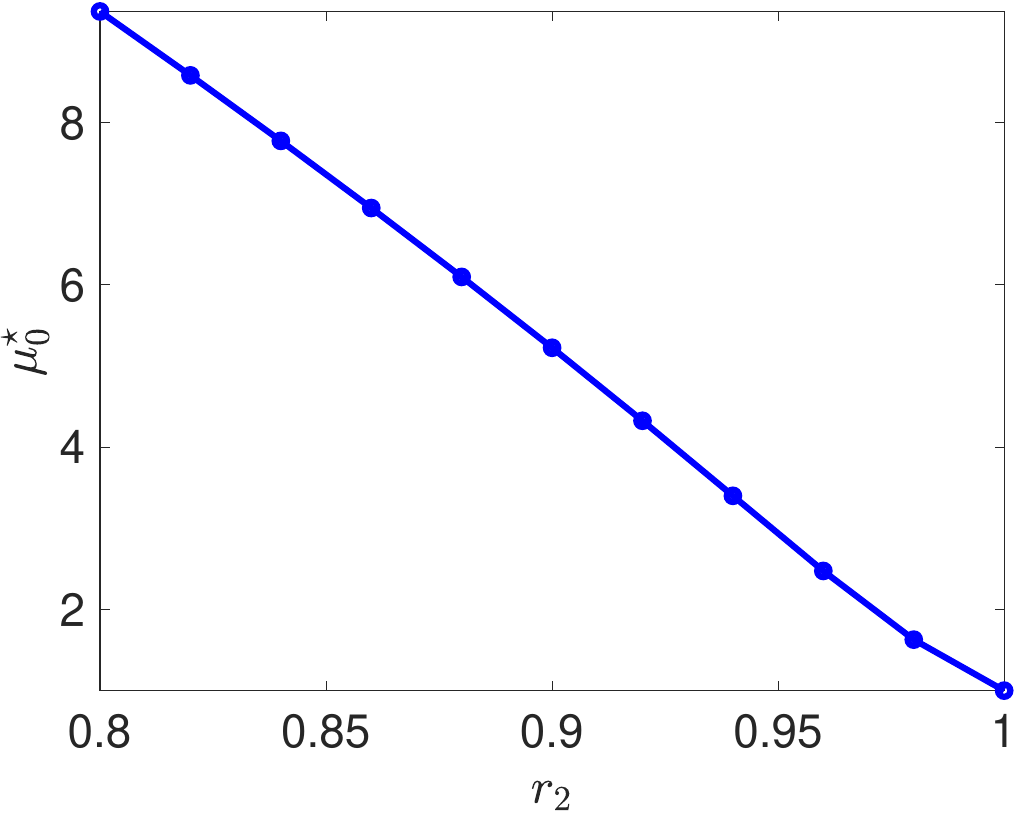}
\centerline{\!\!\!\footnotesize{(e)}}
\end{minipage}
\\
\begin{minipage}{0.32\linewidth}
\centering
\includegraphics[width=1.85in]{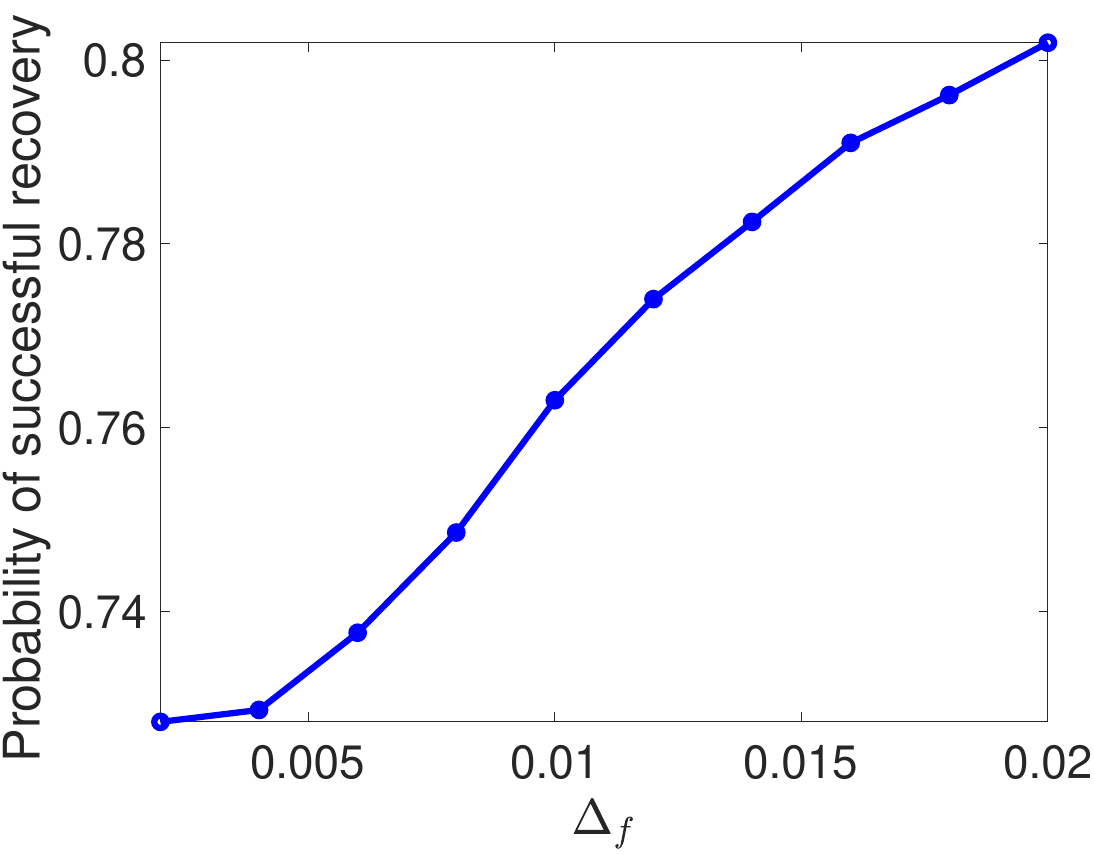}
\centerline{\footnotesize{(b)}}
\end{minipage}
~
\begin{minipage}{0.32\linewidth}
\centering
\includegraphics[width=1.85in]{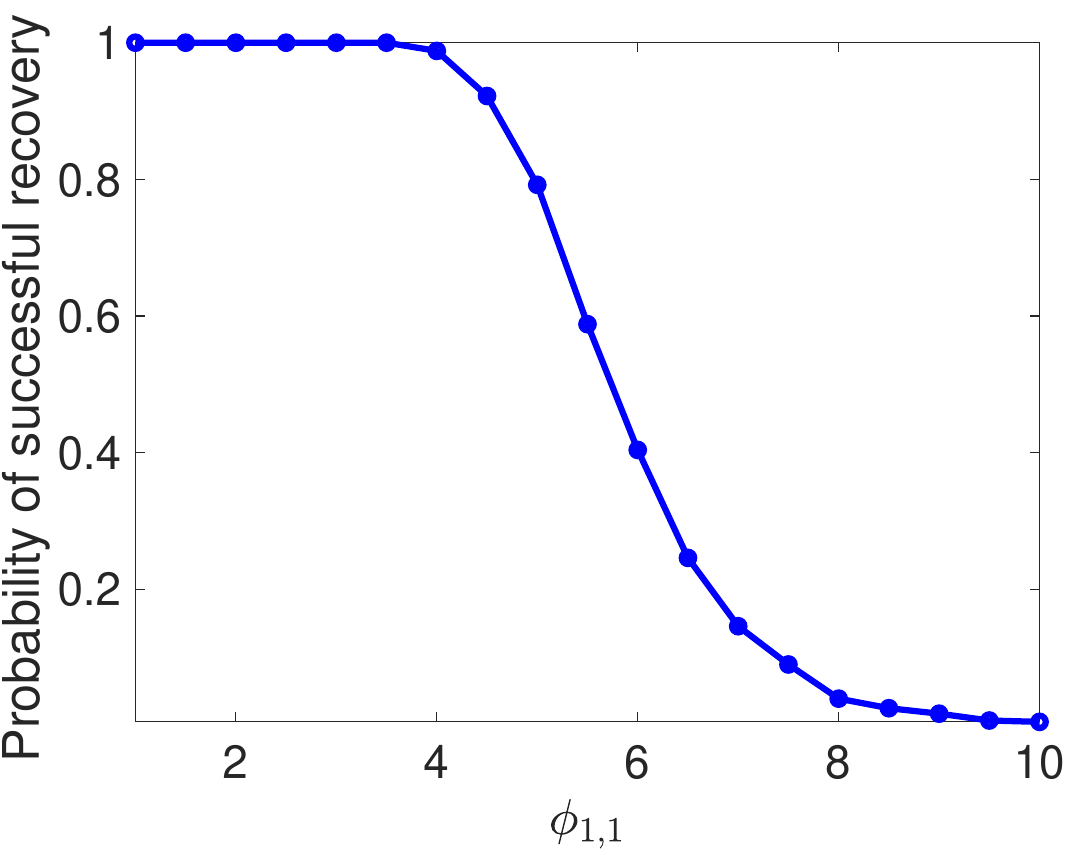}
\centerline{\footnotesize{(d)}}
\end{minipage}
~
\begin{minipage}{0.32\linewidth}
\centering
\includegraphics[width=1.85in]{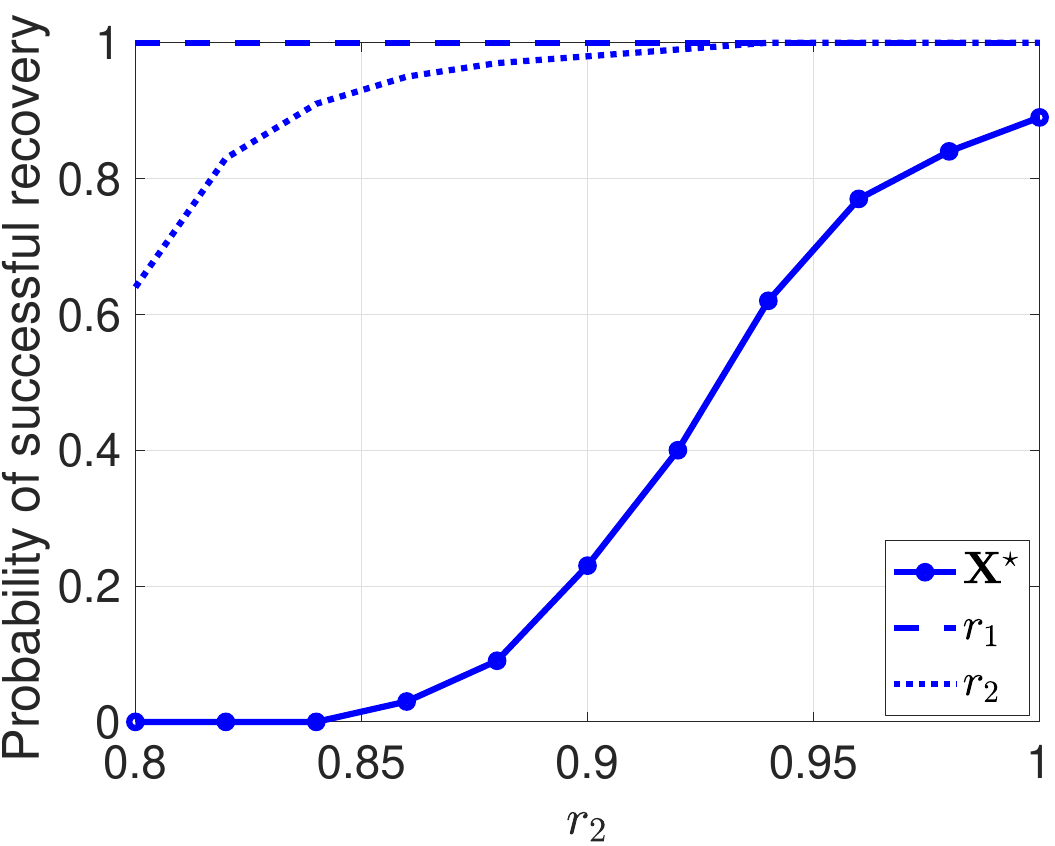}
\centerline{\footnotesize{(f)}}
\end{minipage}
\caption{Influence of minimum frequency separation $\Delta_f$, the matrix $\bPhi$, and damping ratio $r_2$ on the performance of missing data recovery: (a) coherence parameter of the data matrix $\Xs$ and (b) probability of successful data matrix recovery with respect to the minimum frequency separation $\Delta_f$. (c) coherence parameter of the data matrix $\Xs$ and (d) probability of successful data matrix recovery with respect to a various of matrix $\bPhi$. (e) coherence parameter of the data matrix $\Xs$ and (f) probability of successful data matrix and damping ratio recovery with respect to the second damping ratio $r_2$.}
\label{test_MUSIC_Missing_co_pa_deltf}
\end{figure}

These numerical experiments give a sense of how spectral parameters influence the coherence, and thus, recoverability of the data matrix. We stress again, however, that the significance of Corollary~\ref{THMdualploymiss} is that the sample complexity is not stated in terms of the matrix coherence (which may be difficult to immediately relate to the more tangible signal parameters); rather, the dependence on the damping ratios and minimum frequency separation is explicitly revealed in Corollary~\ref{THMdualploymiss}.

\subsection{Comparison with existing algorithms}

In this section, we implement a series of experiments to compare our proposed algorithms with three existing methods: 1) NNM$+$MUSIC\textcolor{black}{/ESPRIT}, 2) MN-MUSIC, and 3) ANM. We define successful parameter recovery as in~\eqref{succ_para}.

\subsubsection{NNM$+$MUSIC \textcolor{black}{and NNM$+$ESPRIT}}

We use NNM$+$MUSIC to denote an alternative approach wherein one first solves the NNM problem in~(\ref{SDPmiss}) to get $\Xh$ and then uses Algorithm~\ref{alg_NN-MUSIC} (or, equivalently, MUSIC) to identify the $r_k$'s and $f_k$'s from $\Xh$.\footnote{A similar idea has also been considered in~\cite{xu2018sep}.}
\textcolor{black}{We use NNM$+$ESPRIT to denote a similar two-stage approach but with MUSIC replaced by the Estimation of Signal Parameters via Rotation Invariance Techniques (ESPRIT) algorithm. The true $r_k$'s and $f_k$'s are set same as in Section~\ref{fdc}. We set $M=50$ and $N=20$.}
To show the advantage of our MD-MUSIC over NNM$+$MUSIC \textcolor{black}{and NNM$+$ESPRIT}, we present the probability of successful parameter recovery (defined in~\eqref{succ_para}) in Table~\ref{table1}.
In the ``two-step" algorithms NNM$+$MUSIC \textcolor{black}{and NNM$+$ESPRIT}, 
we use the true $K$ as the number of frequencies when we implement the MUSIC \textcolor{black}{or ESPRIT} algorithm \textcolor{black}{even though it is unknown and needs to estimated in practice}.

\begin{table}[ht]
\caption{\textcolor{black}{Comparison of the ``one-step" MD-MUSIC algorithm and the ``two-step" NNM$+$MUSIC and NNM$+$ESPRIT algorithms for parameter recovery.}
We present the probability of successful recovery over 1000 trials.} \label{table1}
\begin{center}
\begin{tabular}{|c|c|c|c|c|}
\hline
& \!$10\%$ missing \! & \!$20\%$ missing\!  & \!$30\%$ missing\!      &  \!$40\%$ missing \! \\
\hline
MD-MUSIC &$99.6\%$&$96.9\%$& $90.8\%$&$78.8\%$\\
\hline
      NNM$+$MUSIC      & $99.4\%$&$96.3\%$& $85.0\%$&$56.2\%$\\
\hline
      \textcolor{black}{NNM$+$ESPRIT}      & $98.8\%$&$89.9\%$& $66.6\%$&$29.5\%$\\
\hline
\end{tabular}
\end{center}
\end{table}

\subsubsection{MN-MUSIC}

Next, we compare our proposed MD-MUSIC algorithm with the MN-MUSIC algorithm introduced in Section~\ref{MN-MUSIC}~\cite{suryaprakash2012performance} in a scenario where $20\%$ of the noiseless data entries are missing. We observe that the MN-MUSIC algorithm never successfully recovers the frequencies and damping ratios
since it performs an SVD directly on the missing data.\footnote{No results are shown since MN-MUSIC never recovers successfully.}

\subsubsection{ANM}

Finally, we compare the proposed NN-MUSIC and MD-MUSIC algorithms with ANM in the full and missing data cases, respectively. In ANM, we solve the following SDP
\begin{align*}
\min_{\X,\u,\D}~\frac{1}{2M} \change{Tr}(\change{Toep}(\u))+\frac{1}{2} \change{Tr}(\D) \quad 
\change{s.t.}~\left[
\begin{array}{cc}
\change{Toep}(\u)&\X\\
\X^H&\D
\end{array}
\right]\succeq 0, ~\X_{\Omega} = \Xs_{\Omega},
\end{align*}
where $\change{Toep}(\u)$ is a Hermitian Toeplitz matrix with the vector $\u$
being its first column. We use $\X= \Xs$ instead of $\X_{\Omega} = \Xs_{\Omega}$ in the full data case.
Similar to NN-MUSIC and MD-MUSIC, given the dual solution of the above SDP, we then formulate a dual polynomial and localize the places where the $\ell_2$-norm of the dual polynomial achieves~$1$ to extract the estimated frequencies. Since ANM can only recover frequencies, we only compare the accuracy of estimated frequencies in this section.
All the simulation results presented in this section are an average over 100 trials.

In the full data case, we repeat the first experiment in Section~\ref{fdc} with $N = 10$ and with a variety of $M$ and $\Delta_f$. We firstly fix $\Delta_f = 0.06$ and set the true frequency and damping pairs as $(r_1,f_1) = (0.86, 0.1)$, $(r_2,f_2) = (0.92, 0.16)$, and $(r_3,f_3) = (0.98, 0.8)$. Then, we compare NN-MUSIC and ANM with a variety of $M$. Next, we fix $M=20$, $r_1= 0.92$, $r_2 = 0.98$, and $f_1 = 0.1$. Similar as in Section~\ref{test_coherence}, we then let $f_2 = f_1+\Delta_f$ with various values of $\Delta_f$. The simulation results are given in Figure~\ref{test_NNvsANM}. It can be seen that the NN-MUSIC algorithm significantly outperforms ANM and can always recover the frequencies exactly, as indicated in Theorem~\ref{THMdualploy}. This is because our data contains damping, which is not modeled in ANM.

In the missing data case, we randomly remove $20\%$ or $40\%$ of the data entries. We repeat the above two experiments with these partially observed data matrices to compare MD-MUSIC and ANM. As shown in Figure~\ref{test_MDvsANM}, MD-MUSIC still outperforms ANM significantly in most cases due to its ability to handle damped signals. We also observe that ANM can have a higher probability of successful recovery once the number of observed entries is too small, as shown in Figure~\ref{test_MDvsANM} (c). However, the success probability in this case is still significantly less than~$1$. Note that we have changed $f_2$ from 0.16 to 0.2 in Figure~\ref{test_MDvsANM} (a, c) to test with a larger value of $\Delta_f$. Other parameters used in this part are the same as in the full data experiments.

\begin{figure}[t]
\begin{minipage}{0.23\linewidth}
	
\end{minipage}
\hfill
\begin{minipage}{0.23\linewidth}
\centering
\includegraphics[width=1.3in]{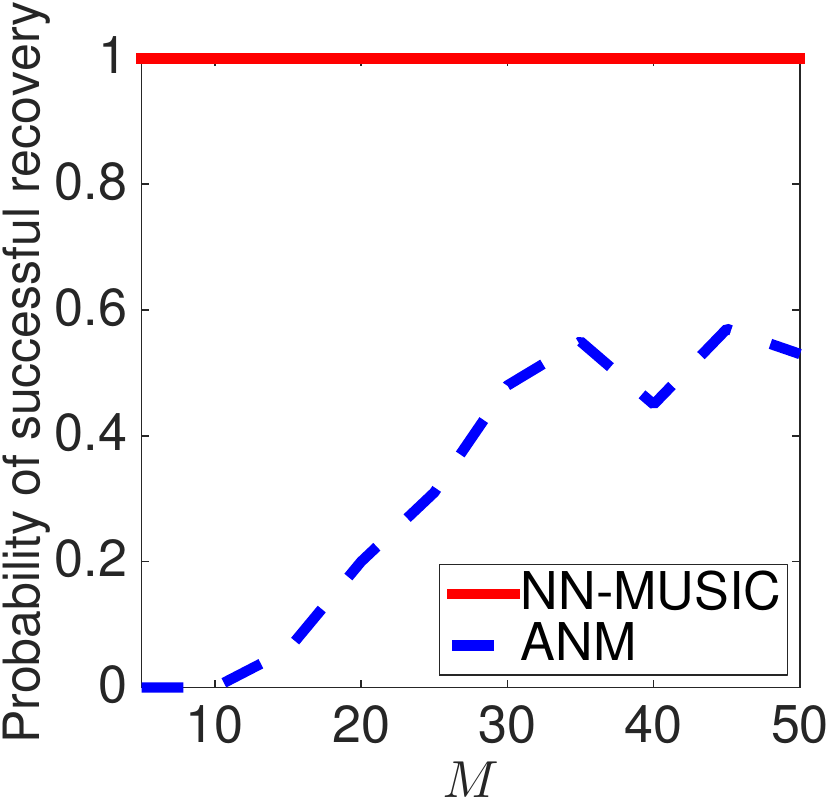}
\centerline{\footnotesize{(a)}}
\end{minipage}
\hfill
\begin{minipage}{0.23\linewidth}
\centering
\includegraphics[width=1.3in]{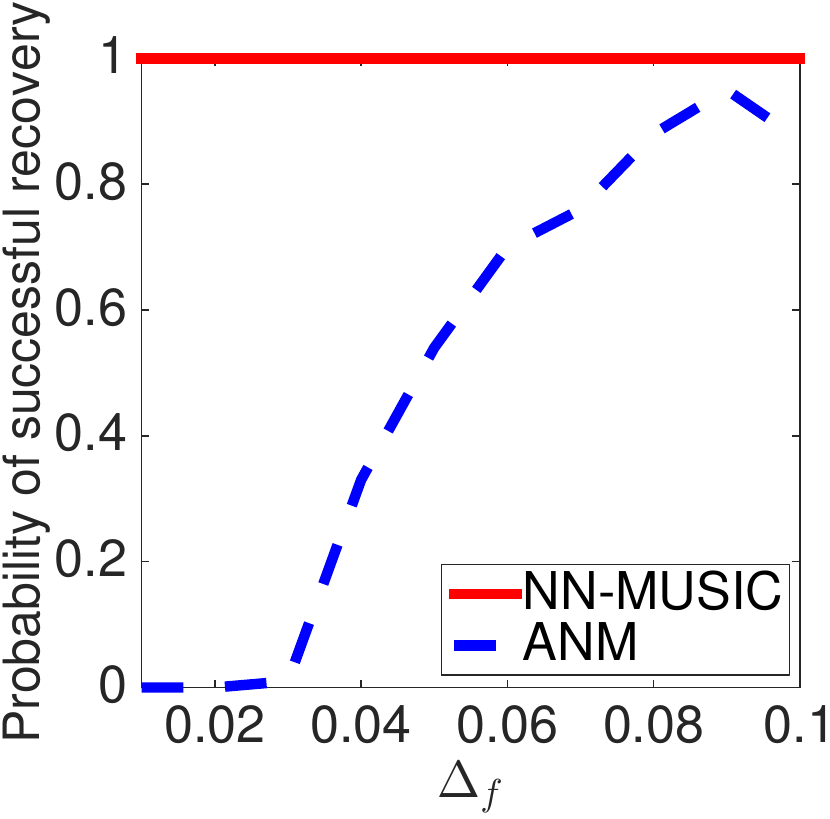}
\centerline{\footnotesize{(b)}}
\end{minipage}
\hfill
\begin{minipage}{0.23\linewidth}
	
\end{minipage}
\caption{Comparison of NN-MUSIC and ANM in the noiseless full data case, with damped exponentials. (a) Probability of successful frequency recovery as a function of $M$, with fixed $\Delta_f = 0.06$ and $K=3$. (b) Probability of successful frequency recovery as a function of $\Delta_f$, with fixed $M=20$ and $K=2$. }
\label{test_NNvsANM}
\end{figure}

\begin{figure}[t]
\begin{minipage}{0.23\linewidth}
\centering
\includegraphics[width=1.3in]{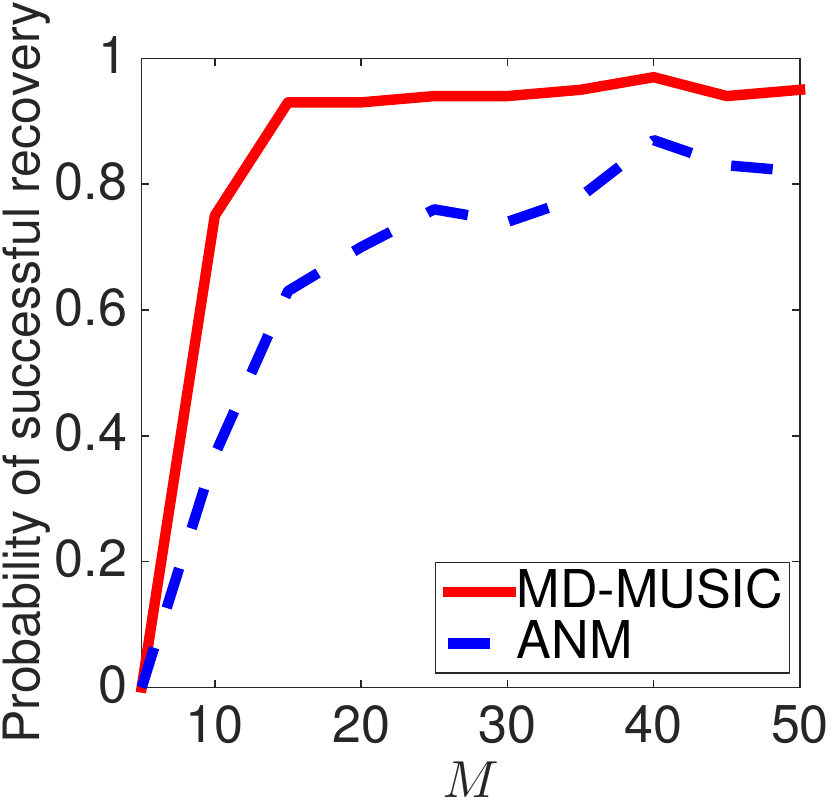}
\centerline{\footnotesize{(a) $20\%$ missing}}
\end{minipage}
\hfill
\begin{minipage}{0.23\linewidth}
\centering
\includegraphics[width=1.3in]{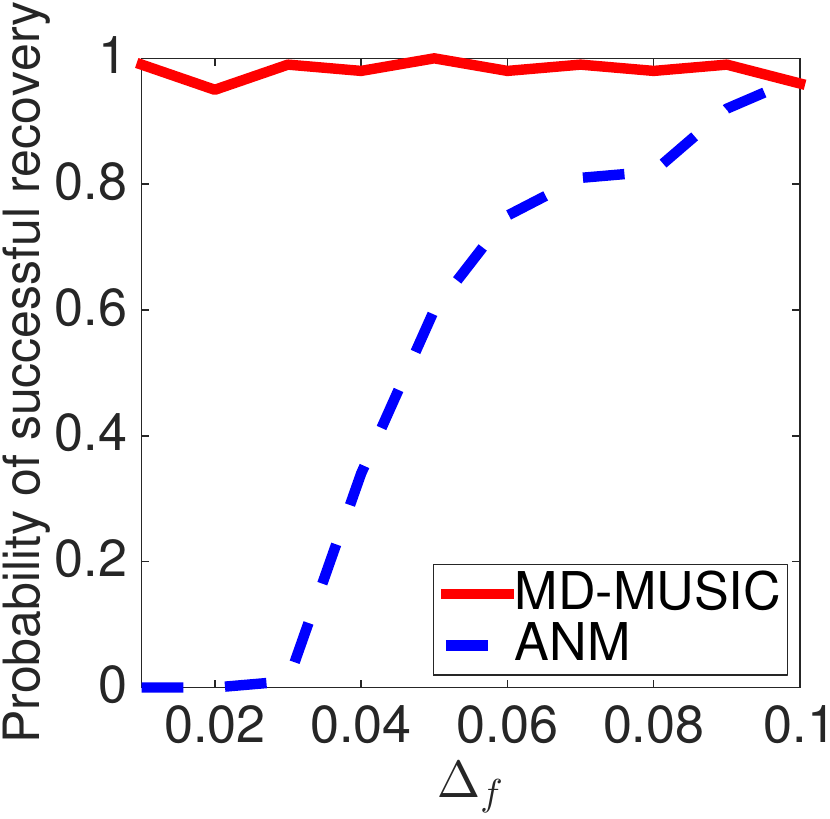}
\centerline{\footnotesize{(b) $20\%$ missing}}
\end{minipage}
\hfill
\begin{minipage}{0.23\linewidth}
\centering
\includegraphics[width=1.3in]{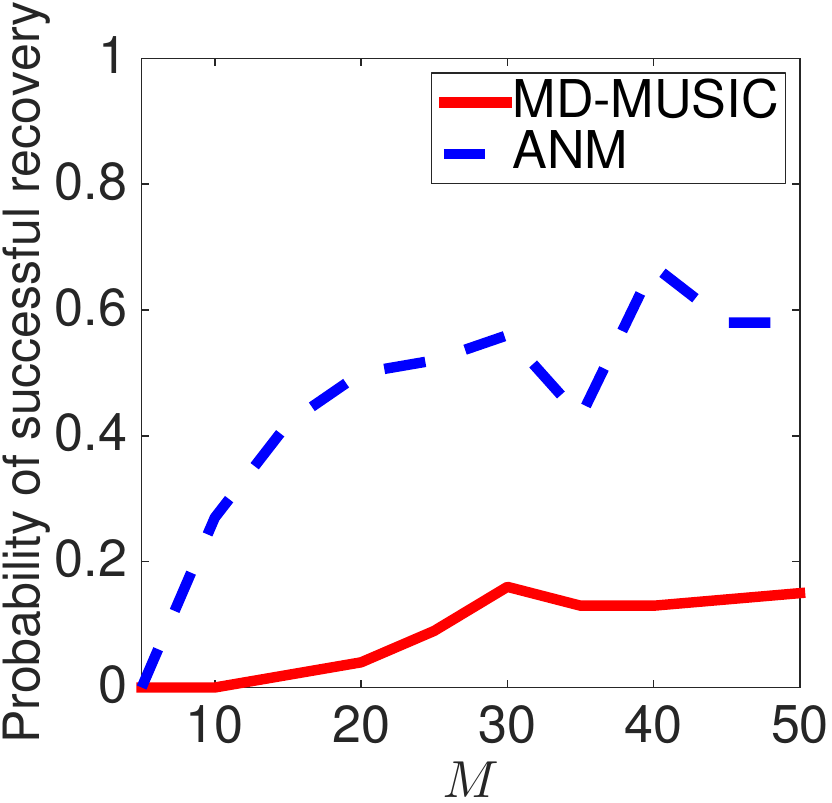}
\centerline{\footnotesize{(c) $40\%$ missing}}
\end{minipage}
\hfill
\begin{minipage}{0.23\linewidth}
\centering
\includegraphics[width=1.3in]{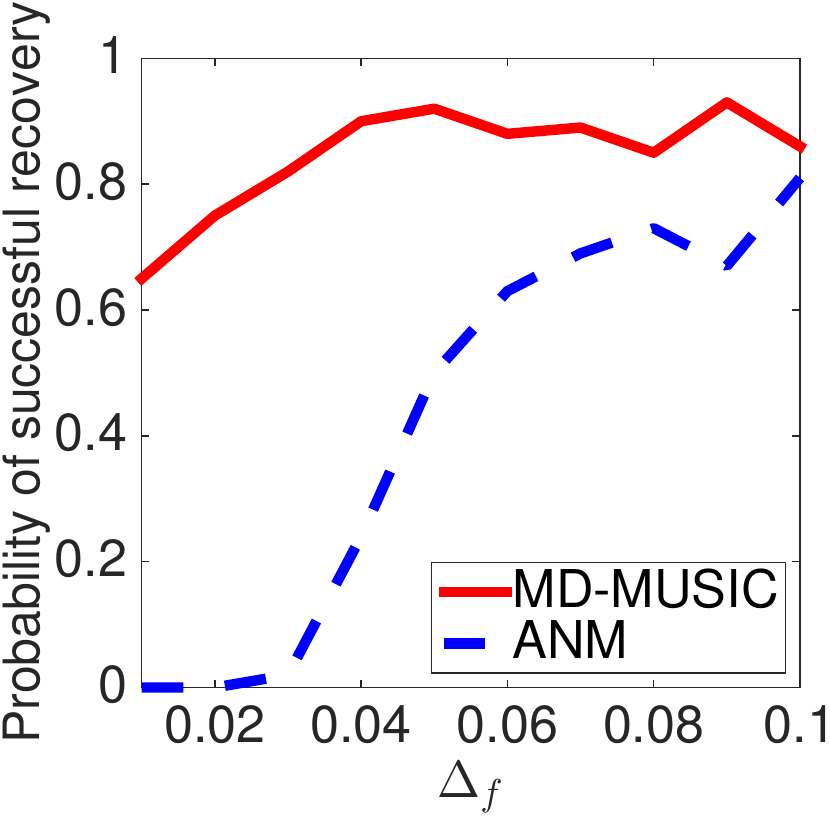}
\centerline{\footnotesize{(d) $40\%$ missing}}
\end{minipage}
\caption{Comparison of MD-MUSIC and ANM in the noiseless missing data case and damped exponentials, with (a, b) $20\%$ or (c, d) $40\%$ of the entries randomly removed. (a), (c) Probability of successful frequency recovery as a function of $M$, with fixed $\Delta_f = 0.1$ and $K=3$. (b), (d) Probability of successful frequency recovery as a function of $\Delta_f$, with fixed $M=20$ and $K=2$.}
\label{test_MDvsANM}
\end{figure}

\section{Conclusion}
\label{conc}

In this work, we provide a convex optimization view for the classical MUSIC algorithm in spectral estimation with damping. In particular, we build a connection between NNM and the classical MUSIC algorithm, which inspires us to propose a new algorithm, named MD-MUSIC, for the missing data field. Theoretical results are provided to guarantee the proposed algorithms.   \textcolor{black}{In particular, it is possible to get exact parameter recovery with the MD-MUSIC algorithm even when we do not have perfect data recovery. Moreover, for the missing data case, we also quantify how the sample complexity depends on the true spectral parameters rather than use certain incoherence properties as in existing literature.} Meanwhile, numerical simulations indicate that the proposed algorithms work very well and significantly outperform some relevant existing methods in frequency estimation of damped exponentials. We leave the robust performance analysis on noisy data for future work.

\section*{Acknowledgement}

MW and SL were supported by NSF grant CCF--1409258, NSF CAREER grant CCF--1149225, and NSF grant CCF-1704204.

\bibliographystyle{ieeetr}
\bibliography{sample}
%


%

\end{document}